\newtheorem{theorem}{Theorem}
\theoremstyle{definition}
\newtheorem{example}{Example}
\newcommand{\mypar}[1]{\smallskip\noindent\textbf{#1.}}
\newcommand{\Prob}[1][\R^d]{\mathcal P\left({#1}\right)}  % all probability distributions
\newcommand{\PP}{\mathsf P}								% original probability
\newcommand{\R}{\mathbb R}								% reals
\newcommand{\dd}{\,\mathrm{d}\,}						% differential
\newcommand{\tr}{^\mathsf{T}}							% transposition
\newcommand{\FS}{\mathcal F}    % function space
\newcommand{\Rinf}{\mathbb R^*} % extended real line
\newcommand{\band}[1]{\mathcal B\left( #1 \right)} % band
\newcommand{\BD}{BD} % band depth
\newcommand{\MBD}{MBD} % modified band depth
\newcommand{\Du}{D_1} % one-dimensional depth
\newcommand{\DuH}{D_{1,H}} % one-dimensional halfspace depth
\newcommand{\DuS}{D_{1,S}} % one-dimensional simplicial depth
\DeclarePairedDelimiter\ceil{\lceil}{\rceil}
\newcommand{\aff}{\mathcal A}   % affine class
\newcommand{\calX}{\mathcal X}  % dataset
\newcommand{\median}[1]{\mathrm{med}\left(#1\right)} % median
\newcommand{\conv}[1]{\mathrm{conv}\left(#1\right)} % convex hull
\newcommand{\pkg}[1]{{\normalfont\fontseries{b}\selectfont #1}}
\let\proglang=\textsf
\let\code=\texttt
\begin{document}

\begin{frontmatter}
%%%%%%%%%%%%%%%%%%%%%%%%%%%%%%%%%%%%%%%%%%%%%%
%%                                          %%
%% Enter the title of your article here     %%
%%                                          %%
%%%%%%%%%%%%%%%%%%%%%%%%%%%%%%%%%%%%%%%%%%%%%%
\title{Which depth to use to construct functional boxplots?}
%\title{A sample article title with some additional note\thanksref{T1}}
\runtitle{Which depth to use to construct functional boxplots?}
%\thankstext{T1}{A sample of additional note to the title.}

\begin{aug}
%%%%%%%%%%%%%%%%%%%%%%%%%%%%%%%%%%%%%%%%%%%%%%%
%% ORCID can be inserted by command:         %%
%% \orcid{0000-0000-0000-0000}               %%
%%%%%%%%%%%%%%%%%%%%%%%%%%%%%%%%%%%%%%%%%%%%%%%
\author[A]{\fnms{Stanislav}~\snm{Nagy}\ead[label=e1]{nagy@karlin.mff.cuni.cz}},
\author[B]{\fnms{Tom\'{a}\v{s}}~\snm{Mrkvi\v{c}ka}\ead[label=e2]{mrkvicka.toma@gmail.com}}
\and
\author[C]{\fnms{Antonio}~\snm{El\'ias}\ead[label=e3]{aelias@uma.es}}
\runauthor{Nagy et al.}
%%%%%%%%%%%%%%%%%%%%%%%%%%%%%%%%%%%%%%%%%%%%%%
%% Addresses                                %%
%%%%%%%%%%%%%%%%%%%%%%%%%%%%%%%%%%%%%%%%%%%%%%
\address[A]{Department of Probability and Mathematical Statistics, Faculty of Mathematics and Physics, Charles University, Prague, Czech Republic\printead[presep={\ }]{e1}.}
\address[B]{Department of Applied Mathematics and Informatics, Faculty of Economics, University of South Bohemia, \v{C}esk\'e Bud\v{e}jovice, Czech Republic\printead[presep={\ }]{e2}.}
\address[C]{OASYS Group, Department of Mathematical Analysis, Statistics and Operations Research and, Applied Mathematics Faculty of Sciences, University of Malaga, M\'alaga, Spain\printead[presep={\ }]{e3}.}
\end{aug}

\begin{abstract}
This paper answers the question of which functional depth to use to construct a boxplot for functional data. It shows that integrated depths, e.g., the popular modified band depth, do not result in well-defined boxplots. Instead, we argue that infimal depths are the only functional depths that provide a valid construction of a functional boxplot. We also show that the properties of the boxplot are completely determined by properties of the one-dimensional depth function used in defining the infimal depth for functional data. Our claims are supported by (i) a motivating example, (ii) theoretical results concerning the properties of the boxplot, and (iii) a simulation study. 
\end{abstract}

\begin{keyword}
\kwd{Functional Boxplot}
\kwd{Functional Data Analysis}
\kwd{Nonparametric Methods}
\kwd{Exploratory Data Analysis}
\kwd{Visualization}
\end{keyword}

\end{frontmatter}
%%%%%%%%%%%%%%%%%%%%%%%%%%%%%%%%%%%%%%%%%%%%%%
%%%% Main text entry area:

%
%
%

\section{Introduction: Functional boxplot}  \label{section: introduction}

The functional boxplot \citep{Sun_Genton2011} is a celebrated visualization tool for functional data.\footnote{Functional boxplot is implemented in \proglang{R} packages \pkg{fda} \citep[function~\code{fbplot}]{R_fda}, \pkg{depthProc} \citep[function~\code{fncBoxPlot}]{Kosiorowski_Zawadzki2019}, and \pkg{GET} \citep[function~\code{fBoxplot}]{Myllymaki_Mrkvicka2019}, and in \proglang{Python} modules \pkg{statsmodels} \citep[function~\code{fboxplot}]{Python_Statsmodels} and \pkg{scikit-fda} \citep[function~\code{Boxplot}]{Ramos_etal2023}.} Using the concept of functional depth, that tool extends the idea of the classical boxplot to the situation when the observations can be represented as real-valued functions defined on a common domain. The functional boxplot consists of three elements:
    \begin{itemize}
        \item the \emph{median function}, defined as the sample function with the highest depth, 
        \item the \emph{central region} (also called \emph{box}) given as the band\footnote{A \emph{band} of functions is the region in $\R^2$ enclosed by the graphs of its delimiting functions, for a rigorous definition see~\eqref{eq: band} in Section~\ref{section: band}.} of $50\%$ of the sample functions with highest depth values, and
        \item the \emph{whiskers band}, obtained by inflating the central region around the median function by a constant factor.
    \end{itemize}
The central region is an analog of the box from $\R$. The whiskers band is used to flag outlying functional data --- if a function crosses its boundary or lies outside, the function is considered to be an outlier. The original functional boxplot in~\cite{Sun_Genton2011} was constructed so that the whiskers band is obtained by inflating the central band by a factor of~$4$, which corresponds to adding $1.5$-times the width of the central region to both sides of that band. A smaller adjustment factor for detecting outliers in functional data was considered in \citet{Sun_Genton2012}; see also \citet{Dai_Genton2018} for a two-stage procedure for constructing a functional boxplot, and \citet{Genton_Sun2020} for a review of further extensions and modifications of the functional boxplot.

The functional boxplot as a concept can be constructed using any functional depth. Many functional depths have been proposed in the literature; for a partial overview of these, see \citet{Gijbels_Nagy2017}. In \citet{Sun_Genton2011}, either the classical band depth or the modified band depth (MBD) for functional data \citep{Lopez_Romo2009} are considered. The popular functional boxplot based on MBD is a boxplot of an integrated type; Figure~\ref{figure: motivation} shows an example.\footnote{In the boxplot in Figure~\ref{figure: motivation}, we depict the whiskers band. Another possibility is plotting the band of all sample functions $X_1, \dots, X_n$ contained inside the whiskers band, as done in \citet{Sun_Genton2011}.} Functional boxplots based on the (modified) band depth are not the only proposal found in the literature. Boxplots based on depths of the so-called infimal (or extremal) type were proposed in \citet[Section~5.1]{Narisetty_Nair2016}.\footnote{Formal definitions of integrated and infimal depths are given in Section~\ref{section: functional depth}.} % The infimal depths are computed as the infimum over all arguments from a given one-dimensional depth.

\begin{figure}
    \centering
    \includegraphics[width=\linewidth]{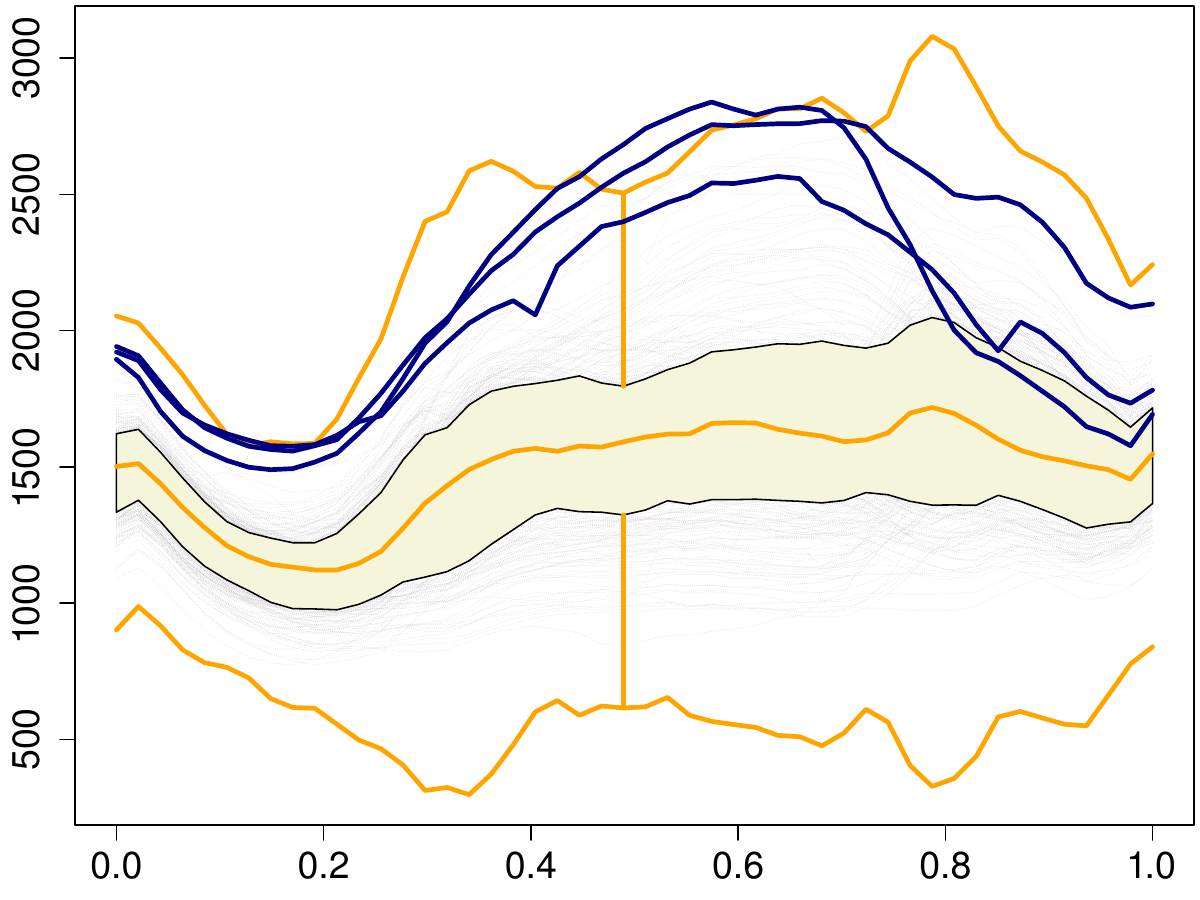}
    \caption{An MBD-based functional boxplot of the $508$ functions from the Monday electricity demand dataset \citep[dataset \code{Electricitydemand}]{R_fds}. The central region of $50\%$ deepest functions is the beige band. Inflating that band by a factor of $4$ around the median function (the thick orange curve in the middle of the $50\%$ central region) gives the whiskers (thick orange curves) that delimit the whiskers band of non-outlying functional data. Three functions are flagged as outliers (thick blue curves).}
    \label{figure: motivation}
\end{figure}

This work aims to point out that the boxplot based on an infimal depth has much better properties than a boxplot based on an integrated depth (e.g., MBD). The reason is that the boxplot uses local information to decide which function is an outlier --- namely, a function is flagged as outlying if it lies at any point outside the whiskers band. Thus, if the depth that defines the central region follows global principles, as all integrated depths do, then it does not match the locality paradigm used to flag outliers. The following motivating example demonstrates this. 

Assume a simple Gaussian model on the interval $[0,1]$ and two kinds of deviating functions. First is a \emph{local} outlier, which significantly departs from the base model on bounded sub-domains of $[0,1]$. Second is a \emph{global} outlier, which deviates from the base model in a less pronounced way, but on the whole domain $[0,1]$. The functions that we used for this illustrative example are shown in Figure~\ref{MotExf}. Figure~\ref{MotExFB} shows the functional boxplots using MBD (left) and the \textsf{erl} index (right) for our set of 120 functions, where 100 are from the base model, 10 are local outliers, and 10 are global outliers. The \textsf{erl} index is a simple tie-breaking refinement of the infimal depth commonly used in the literature \citep{Narisetty_Nair2016, Myllymaki_Mrkvicka2019}.

\begin{figure}
    \centering
    \includegraphics[width=\linewidth]{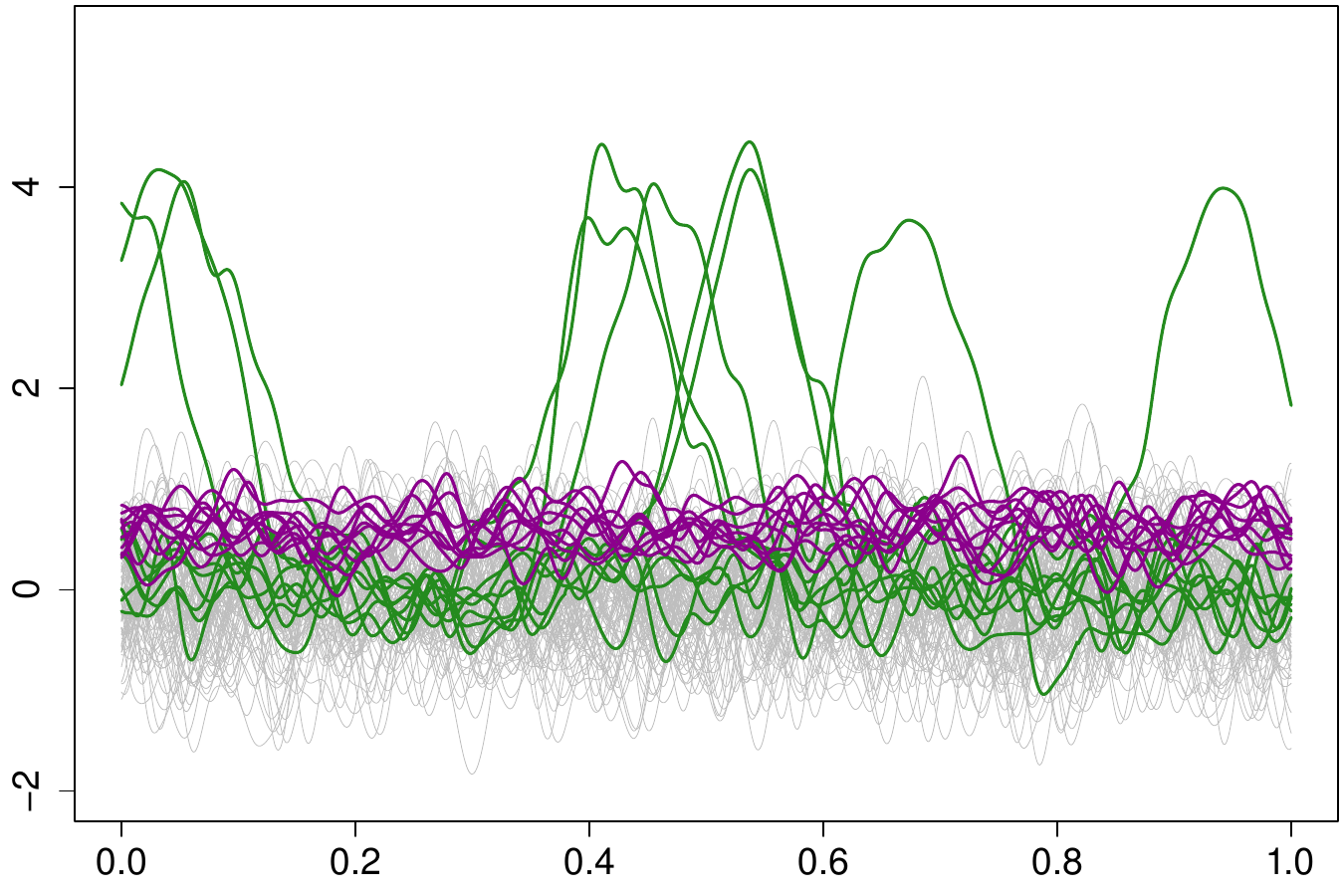}
    \caption{A simulated sample of $120$ functions: (i) 100 functions of the base model (gray), (ii) 10 functions that are local outliers with respect to (w.r.t.) the base model (green), and (iii) 10 functions that are global outliers w.r.t. the base model (magenta).}
    \label{MotExf}
\end{figure}

\begin{figure}
    \centering
    \includegraphics[width=.48\linewidth]{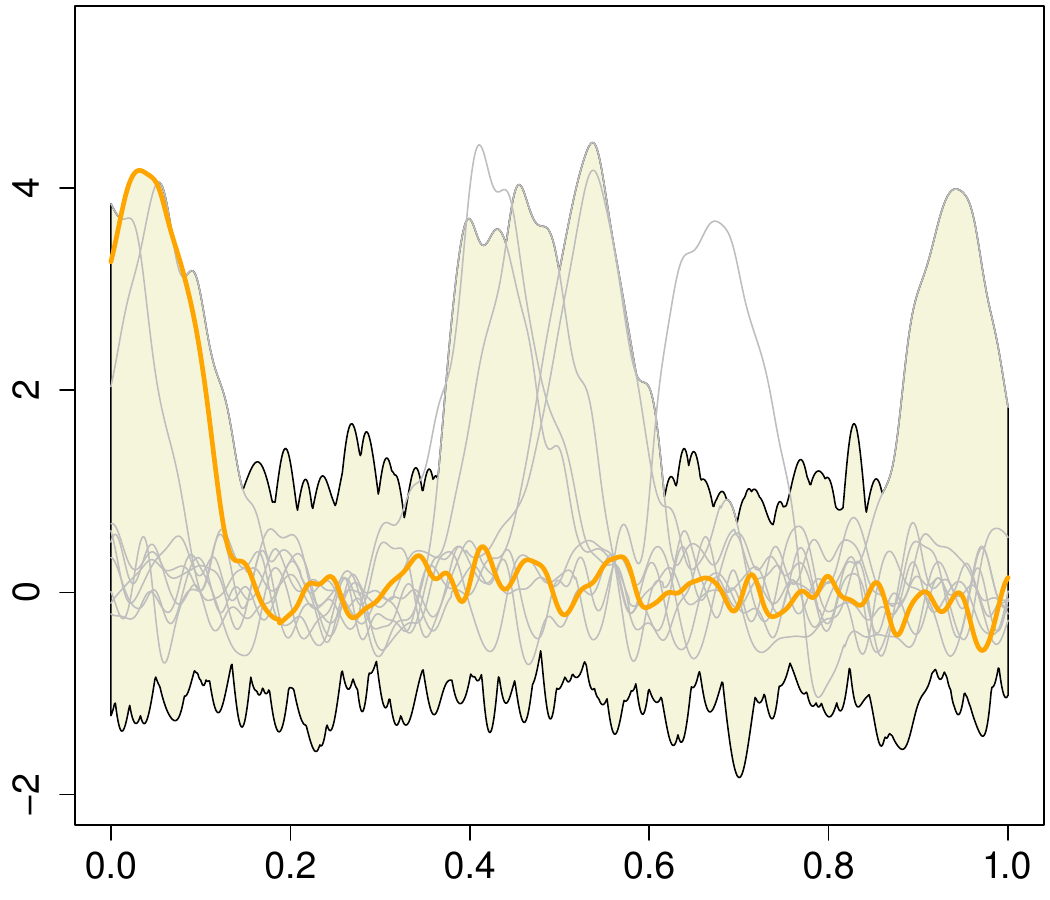} \hfill 
    \includegraphics[width=.48\linewidth]{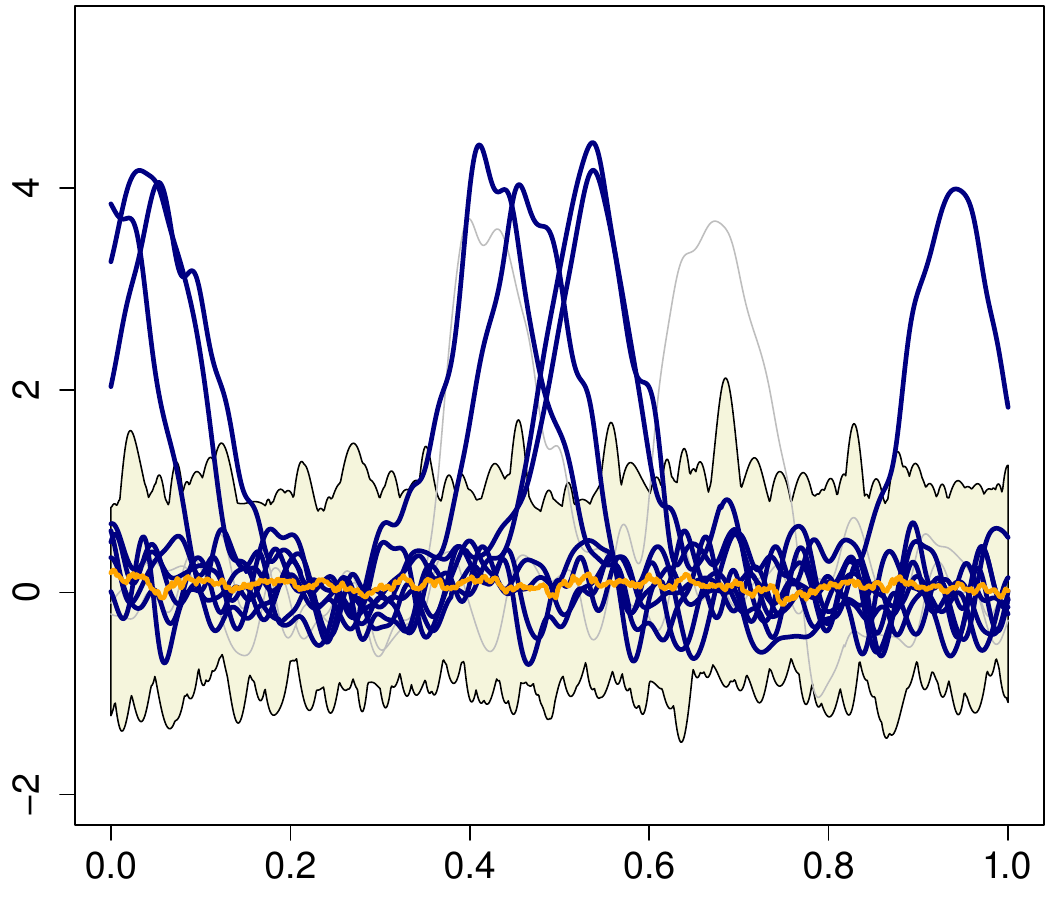}
    \caption{Functional boxplots for the motivating example based on MBD (left), and \textsf{erl} (right). The functional median is displayed as a thick orange curve; the band of $50\%$ of the most central functions is the beige region, and the detected outliers are in thick blue. No outliers are detected using the boxplot based on MBD.}
    \label{MotExFB}
\end{figure}

\begin{figure}
    \centering
    \includegraphics[width=.45\linewidth]{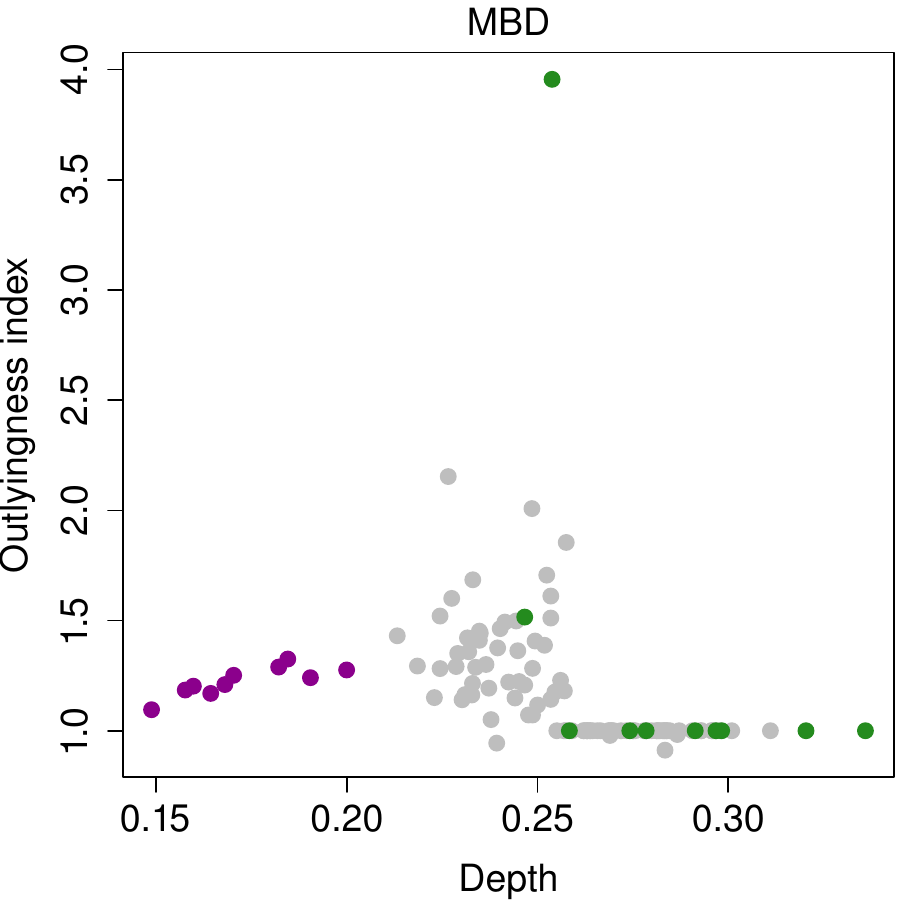} \hfill 
    \includegraphics[width=.45\linewidth]{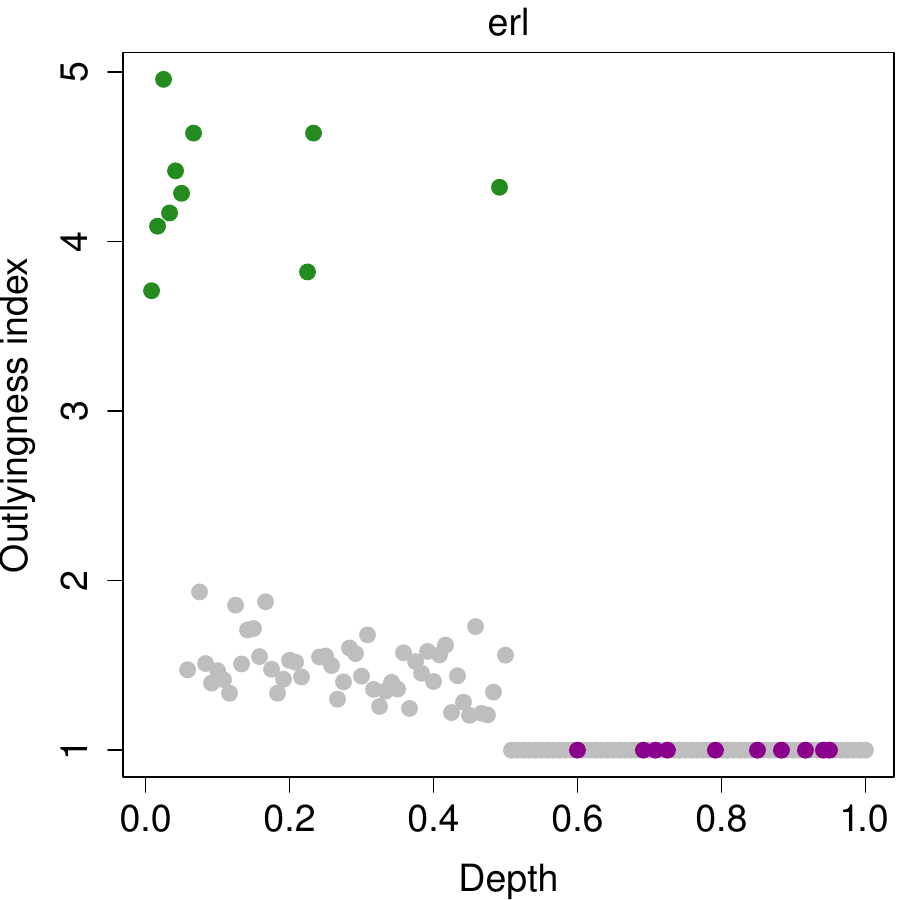} %\hfill
    \caption{Scatter plots of the depth ($x$-axis) against the outlyingness index of each function from our motivating example, using MBD (left) and \textsf{erl}  (right). Color coding corresponds to the colors used in Figure~\ref{MotExf}.}
    \label{MotExFB1}
\end{figure}

One expects that the functional boxplot based on an integrated depth will be sensitive particularly to the global outliers, whereas the functional boxplot based on an infimal depth will be sensitive to local outliers. Figure~\ref{MotExFB1} allows us to assess this. On the $x$-axis, we see the depth attached to the data. On the $y$-axis, we display the outlyingness index of the curves, defined as the maximum inflation factor of the $50\%$ central region that makes the datum fall outside the inflated central region. Outlyingness index $4$ corresponds to the usual constant $1.5\times\mathrm{IQR}$ from the one-dimensional boxplot. As we see in Figure~\ref{MotExFB1}, MBD is indeed able to detect the global outliers (the curves in magenta get very low MBD on the $x$-axis), but its boxplot ($y$-axis) does not flag these functions as outlying. The figure also shows that the \textsf{erl} index is sensitive to local outliers (green points with low depth values on the $x$-axis), and the same is true for its boxplot ($y$-axis). Thus, the functional boxplot with an infimal depth is capable of detecting the local outliers, but the functional boxplot with an integrated depth fails to detect even the global outliers that are identified as extreme by the integrated depth. It is unrealistic to expect a functional boxplot to detect outliers that are not identified as extreme by the used depth; thus, we do not anticipate this. We, however, see that the boxplot based on an integrated depth fails to flag even the outliers detected by its own depth, while the boxplot based on an infimal depth does not suffer from such problems. Our example illustrates two fundamental problems of MBD-based boxplots: 
\begin{description}
    \item[\textbf{Problem~1:}] \textbf{Lack of robustness.} Local outliers can seriously bloat the central region computed by an integrated depth. This happens because local outliers can obtain high integrated depth, and thus contribute to the shape of the central region. The subsequent expansion of the $50\%$ central region to obtain the whiskers band then makes the functional boxplot insensitive to any deviations in the data.  
    \item[\textbf{Problem~2:}] \textbf{Outlier masking.} Even though the global outliers have the lowest MBD of all sample functions (see the left-hand panel of Figure~\ref{MotExFB1}), they fail to be detected as outliers by the MBD boxplot. Even worse, they are completely contained in the $50\%$ MBD central region.
\end{description} 
The gist of our example shows that the integrated depth-based central regions do not satisfy a natural property one would expect from a boxplot based on bands:
    % \[  
    % \begin{aligned}
    % \mbox{(\textcolor{red}{\textsf{BC}}) \textbf{Band convexity:} A function with depth} \\
    % \mbox{lower than all functions used to construct} \\
    % \mbox{the central region $C$ must leave $C$ at least} \\
    % \mbox{at one argument of the function.}  
    % \end{aligned} %\tag{\textcolor{red}{\textsf{BC}}}
    % \]

\smallskip

(\textcolor{red}{\textsf{BC}}) \textbf{Band convexity:} A function with depth lower than all functions used to construct the central region $C$ must leave $C$ at least at one argument of the function.

\smallskip
    
We call this crucial property \emph{band convexity}; in other works, it is also called the \emph{intrinsic graphical interpretation} property \citep{Myllymaki_Mrkvicka2019}. % Functional central regions are used there to interpret a test result with functional test statistics graphically. And the authors used only infimal type depths to define the central region in order to have one-to-one correspondence between the formal test result and graphical interpretation, i.e., to have intrinsic graphical interpretation.

\begin{figure}
    \centering
    \includegraphics[width=0.45\linewidth]{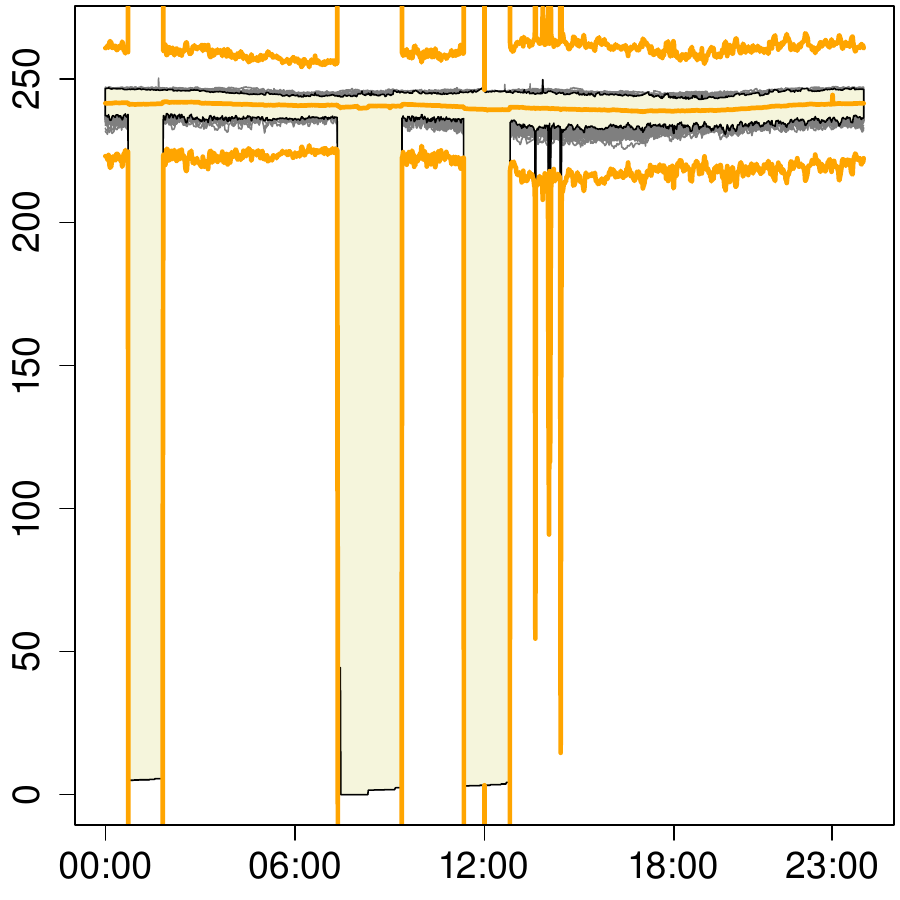} 
    \includegraphics[width=0.45\linewidth]{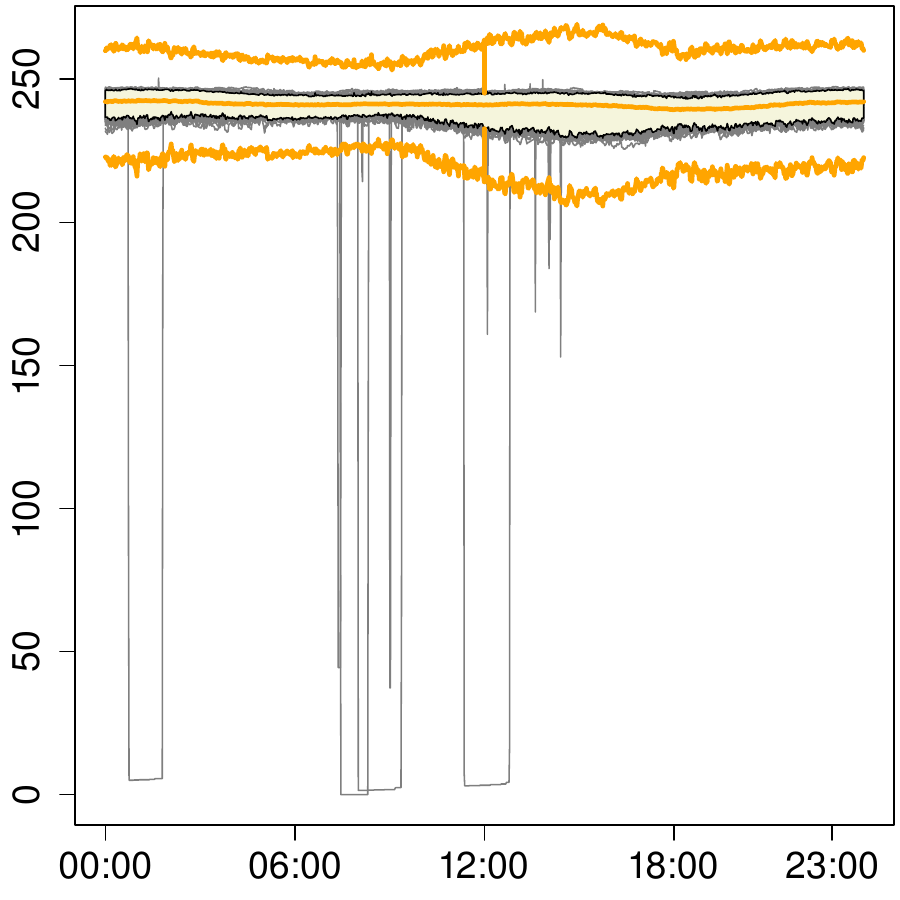} 
    \caption{Real data example: Household voltage circuit monitoring using an integrated depth-based boxplot (left) and an infimal depth-based boxplot (right). For a detailed description of the data see the Supplementary Material.}
    \label{figure: real data}
\end{figure}

Our motivating example is not unrealistic --- a real dataset exhibiting the same phenomenon can be found in Figure~\ref{figure: real data}. Both these examples illustrate our main misgivings concerning using integrated depths together with a functional boxplot. Several additional, more elaborate examples in this spirit can be found in the Supplementary Material. The rest of this paper contains theoretical proofs supporting and refining our arguments. First, after introducing the necessary background on bands and functional depths in Sections~\ref{section: band}--\ref{section: band convexity}, in Section~\ref{section: central region} we prove that the central region constructed from an integrated depth and a finite sample of functions is inconsistent as the number of functions increases; in fact, it converges to the whole function space. Second, in Section~\ref{section: infimal depth} we prove that the band convexity of a depth is satisfied exclusively for depths of infimal type. Third, we prove that infimal depth satisfies the rearrangement invariance, another natural desideratum for the construction of a functional boxplot. Finally, in Section~\ref{section: one-dimensional} we specify the properties of one-dimensional depths that are used to define the infimal depths, in relation to the properties of the resulting boxplot. A brief simulation study can be found in Section~\ref{section: simulations}. There, we further explore the statistical properties of various boxplots. In particular, we demonstrate 
%it shows that the population MBD central region is too wide for wavy functions; on the other hand, it is too narrow for smooth functions. 
that the population MBD central region depends heavily on the number of functions from which it is constructed as well as on the amount of smoothness of the functions. The same problem does not appear for boxplots based on infimal depths. Some extensions of our theory and concluding remarks are in Section~\ref{section: conclusions}. Technical details and additional supporting results are postponed to the Supplementary Material.

\section{Functional boxplots and functional depth}

Let $\left(\Omega, \mathcal A, \PP\right)$ be the probability space on which all random variables are defined. For a topological space $\mathcal S$, $\Prob[\mathcal S]$ stands for all Borel probability measures on $\mathcal S$, and $X \sim P \in \Prob[\mathcal S]$ denotes a random variable $X$ with distribution $P$. The one-dimensional Lebesgue measure on $\R$ is denoted by $\lambda$.

We consider data living in $\FS$, a Banach space of functions $\mathcal T \to \R$ for $\mathcal T \subset \R$ a compact interval; without loss of generality, we take $\mathcal T = [0,1]$. Only very mild properties are imposed on the function space $\FS$. We assume that constant functions belong to $\FS$, and that for any $x, y \in \FS$ we also have $x\,y$ defined by $t \mapsto x(t)\,y(t)$ as an element of $\FS$. We further assume that the norm $\left\Vert \cdot \right\Vert$ in $\FS$ is \emph{compatible with} the absolute value $\left\vert \cdot \right\vert$ in $\R$. This means that whenever $\left\vert u_n \right\vert \to \infty$ in $\R$, then for every sequence of functions $\left\{ x_n \right\}_{n=1}^\infty \subset \FS$ such that $x_n(t) = u_n$ for some $t \in [0,1]$ we can write $\left\Vert x_n \right\Vert \to \infty$. Finally, we require a weak \emph{approximation property} 
    \begin{enumerate}[label=\textcolor{red}{\textsf{(A)}}, ref=\textsf{(A)}]
    \item \label{A} For any $f \in \FS$, $t_0 \in [0,1]$, $y \in \R$, and $\varepsilon > 0$ there exists a function $g = g(f,t_0,y,\varepsilon) \in \FS$ such that $g(t_0) = y$, and $\lambda\left( \left\{ t \in [0,1] \colon g(t) \ne f(t) \right\} \right) < \varepsilon$.
    \end{enumerate}
This property is satisfied for all interesting function spaces. For instance, for $\FS$ the space of continuous functions,~\ref{A} holds true thanks to the classical Luzin theorem \citep[Theorem~7.5.2]{Dudley2002} and the Tietze-Urysohn extension theorem \citep[Theorem~2.6.4]{Dudley2002}. The compatibility of the supremum norm $\left\Vert x \right\Vert = \sup_{t \in [0,1]} \left\vert x(t) \right\vert$ used in the space of continuous functions is trivially guaranteed as $\left\vert x(t) \right\vert \leq \left\Vert x \right\Vert$ for each $t \in [0,1]$.

For a functional distribution $P \in \Prob[\FS]$ with $X \sim P$ we denote by $P_t \in \Prob[\R]$ the distribution of the random variable $X(t) \in \R$, for $t \in [0,1]$.

\subsection{Bands of functions} \label{section: band}

Band depths for functional data are inherently tied to the concept of bands in function spaces. A \emph{band} of a collection of functions $B$ is a set in $\FS$ of the form
    \begin{equation}    \label{eq: band}
    \begin{aligned}
    & \band{B} \\
    & = \Big\{ x \in \FS \colon \inf_{y \in B} y(t) \leq x(t) \leq \sup_{y \in B} y(t) \mbox{ for all }t \in [0,1]\Big\}.   
    \end{aligned}
    \end{equation}  
The delimiting functions $y \colon [0,1] \to \Rinf$ from $B$ do not necessarily come from $\FS$; they are allowed to take infinite values in $\Rinf = \R \cup \left\{ - \infty, \infty \right\}$. A trivial example of a band is the whole space $\FS$, if $B$ is composed of $y_1(t) = -\infty$ and $y_2(t) = \infty$ for all $t \in [0,1]$. In the common situation when $B$ is finite with elements $y_1, \dots, y_m$ we write also $\band{y_1, \dots, y_m}$ instead of $\band{\left\{y_1, \dots, y_m\right\}}$. The set 
    \[  
    \begin{aligned}
    & \left\{ u \in \R \colon \inf_{x \in \band{B}} x(t) \leq u \leq \sup_{x \in \band{B}} x(t) \right\} \\
    & = \left\{ x(t) \colon x \in \band{B} \right\} \subseteq \R  
    \end{aligned} \]
is called the \emph{slice} of the band $\band{B}$ at $t \in [0,1]$. Bands of functions are closely related to the envelopes of functional data \citep{Myllymaki_Mrkvicka2019}. % --- an envelope of $B$ is typically defined as a subset of $[0,1]\times \R$ given by
%    \[ \bandp{B} = \left\{ (t, x(t))\tr \in [0,1]\times\R \colon \inf_{y \in B} y(t) \leq x(t) \leq \sup_{y \in B} y(t) \mbox{ for all }t \in [0,1]\right\}. \]
%In particular, for any $x \in \FS$ we have $x \in \band{B}$ if and only if the graph of $x$ given by
%    \[  G(x) = \left\{ (t,x(t))\tr \in \R^2 \colon t \in [0,1] \right\} \]
% is a subset of $\bandp{B}$.

\subsection{One-dimensional depths}

The theoretical background of depths defined on Euclidean spaces $\R^d$ is well studied \citep{Liu1990, Zuo_Serfling2000}. According to \citet{Zuo_Serfling2000}, a depth function $D$ in $\R^d$ should satisfy the following minimal criteria: 
    \begin{enumerate*}[label=(\arabic*{})]
    \item Affine invariance; 
    \item Maximality at the center; 
    \item Monotonicity w.r.t. the deepest point; and
    \item Vanishing at infinity.
    \end{enumerate*}
Restrict now the argumentation to the one-dimensional case $d=1$. For a one-dimensional distribution $Q \in \Prob[\R]$ we will understand the median of $Q$ in the broad sense as any $u \in \R$ such that $\lim_{v \to u-} F(v) \leq 1/2 \leq F(u)$, where $F$ is the distribution function of $Q$; we write $u \in \median{Q}$. A one-dimensional depth $\Du \colon \R \times \Prob[\R] \to [0,1]$ is expected to verify for each $Q \equiv Q_Y \in \Prob[\R]$ and $Y \sim Q$
    \begin{enumerate}[label=\textcolor{red}{\textsf{(D$_\arabic*$)}}, ref=\textsf{(D$_\arabic*$)}]
        \item \label{D1} \textbf{Affine invariance:} For any $a \in \R$, $a \ne 0$ and any $b \in \R$ we have
            \[  \Du(u; Q_Y) = \Du(a\,u + b; Q_{a\,Y + b})   \quad \mbox{for all }u \in \R,  \]
        where $Q_{a\,Y + b} \in \Prob[\R]$ stands for the distribution of $a\,Y + b$.
        \item \label{D2} \textbf{Maximality at the median:} If for $m \in \R$ we have
            \begin{equation}    \label{eq: univariate maximality}
            \Du(m; Q) = \sup_{u \in \R} \Du(u; Q) > 0,   
            \end{equation}
        then $m \in \median{Q} \in \R$, i.e. $m$ is a median of $Q$.
        \item \label{D3} \textbf{Monotonicity w.r.t. the median:} For any $m \in \R$ that satisfies~\eqref{eq: univariate maximality}, the function $u \mapsto \Du(u; Q)$ is non-decreasing on $(-\infty, m]$ and non-increasing on $[m, \infty)$.
        \item \label{D4} \textbf{Vanishing at infinity:} 
        \[ \lim_{u \to \infty} \Du(u; Q) = \lim_{u \to -\infty} \Du(u; Q) = 0.\]
    \end{enumerate}
Classical examples of one-dimensional depths are the \emph{halfspace} (or \emph{Tukey}) depth in $\R$ \citep{Tukey1975, Donoho_Gasko1992} given by
    \begin{equation}    \label{eq: halfspace depth}  
    \DuH(u; Q) = \min\left\{ Q((-\infty, u]), Q([u,\infty)) \right\}  
    \end{equation}
for $u \in \R$, or the one-dimensional \emph{simplicial depth} \citep{Liu1990} defined as
    \[  \DuS(u; Q) = \PP\left( u \in \left[ \min\left\{ Y_1, Y_2 \right\}, \max\left\{ Y_1, Y_2\right\} \right] \right)  \]
for $u \in \R$, where $Y_1, Y_2 \sim Q$ are independent. The simplicial depth can be equivalently written in terms of the distribution function $F$ of $Q$ as
    \begin{equation}    \label{eq: simplicial depth}
    \begin{aligned}
    \DuS(u; Q) & = 1 - Q((-\infty,u))^2 - Q((u,\infty))^2 \\
    & = 2\,F(u)(1-F(u)) + Q(\left\{u\right\})^2
    \end{aligned}
    \end{equation}
for $u \in \R$. The halfspace depth~\eqref{eq: halfspace depth} satisfies all conditions~\ref{D1}--\ref{D4} for any $Q \in \Prob[\R]$. The simplicial depth~\eqref{eq: simplicial depth} verifies~\ref{D1} and~\ref{D4} for any $Q \in \Prob[\R]$, but~\ref{D2} and~\ref{D3} only if $Q \in \Prob[\R]$ does not contain atoms; see the Supplementary Material.

\subsection{Functional depths}  \label{section: functional depth}

A general \emph{functional depth} is a mapping 
    \[  D \colon \FS \times \Prob[\FS] \to [0,1]    \] 
that satisfies (some of) the desirable properties listed in \cite{Gijbels_Nagy2017}. These properties directly translate~\ref{D1}--\ref{D4} required from one-dimensional, or more generally, $d$-variate depth from \citet{Zuo_Serfling2000}. The following versions of these conditions will be important in the sequel. For every $P \equiv P_X \in \Prob[\FS]$ and $X \sim P$ we want $D$ to satisfy:
    \begin{enumerate}[label=\textcolor{red}{\textsf{(FD$_\arabic*$)}}, ref=\textsf{(FD$_\arabic*$)}]
        \item \label{FD1} \textbf{Function-affine invariance:} For any $a, b \in \FS$, $a(t) \ne 0$ for all $t \in [0,1]$ we have
            \[  D(x; P_X) = D(a\,x + b; P_{a\,X + b})   \quad \mbox{for all }x \in \FS,  \]
        where $P_{a\,X + b} \in \Prob[\FS]$ stands for the distribution of $a\,X + b \colon t \mapsto a(t)\,X(t) + b(t)$.
        \item \label{FD2} \textbf{Maximality at the coordinate-wise median:} If for some $m \in \FS$ 
            \begin{equation}    \label{eq: function maximality}
            D(m; P) = \sup_{x \in \FS} D(x; P) > 0   
            \end{equation}
        then $m(t) \in \median{P_t}$ for all $t \in [0,1]$, i.e. $m$ must be a coordinate-wise median of $P$.
        \item \label{FD3} \textbf{Band monotonicity w.r.t. the median:} For any $m \in \FS$ that satisfies~\eqref{eq: function maximality}, any $x \in \FS$ and any $y \in \FS \cap \band{m,x}$ we have $D(x; P) \leq D(y; P)$.
        \item \label{FD4} \textbf{Vanishing at infinity:} $\lim_{\left\Vert x \right\Vert \to \infty} D(x; P) = 0$.
        \end{enumerate}

The \emph{band depth} \citep{Lopez_Romo2009} of a function $x \in \FS$ w.r.t. $P \in \Prob[\FS]$ is given by
    \begin{equation}    \label{eq: band depth}
    \BD(x;P) = \PP\left( x \in \band{X_1, X_2} \right), 
    \end{equation}      
where $X_1, X_2 \sim P$ are independent random functions.\footnote{This definition corresponds to the common choice of $J=2$ random functions $X_1, X_2$ in \citet{Lopez_Romo2009}. Analogous band depths for $J > 2$ can also be considered, with only minor changes in what follows.} The band depth presents a functional analog of the simplicial depth from $\R^d$, see~\eqref{eq: simplicial depth}. A depth even more popular in the construction of functional boxplots is the \emph{modified band depth} \citep[MBD,][]{Lopez_Romo2009}, defined for $x \in \FS$ w.r.t. $P \in \Prob[\FS]$ as
    \begin{equation}    \label{eq: MBD}  
    \begin{aligned}
    \MBD(x;P) = \mathsf{E}\,  & \left[\lambda\left( \left\{ t \in [0,1] \colon \min\{ X_1(t), X_2(t) \} \right. \right. \right. \\
    & \left. \left. \left. \leq x(t) \leq \max\{ X_1(t), X_2(t) \}  \right\} \right)\right]. 
    \end{aligned}
    \end{equation}
Again, $X_1, X_2 \sim P$ are independent. The integrand in~\eqref{eq: MBD} is interpreted as the proportion of time points $t \in [0,1]$ such that $x$ is contained inside the (random) band $\band{X_1, X_2}$. MBD is a particular case of a wider collection of functional depths of \emph{integrated type} (or \emph{integrated depths}) \citep{Fraiman_Muniz2001, Cuevas_Fraiman2009, Nagy_etal2016}. An integrated depth is always defined with a one-dimensional depth as its building block. First, a one-dimensional depth $\Du \colon \R \times \Prob[\R] \to [0,1]$ is given. Then a \emph{integrated depth} based on $\Du$ is defined for $x \in \FS$ and $P \in \Prob[\FS]$ as 
    \begin{equation}  D(x; P) = \int_0^1 \Du(x(t); P_t) \dd t. \label{eq: integrated depth}    
    \end{equation}
It is simple to observe \citep[Section~A.3]{Nagy_etal2016} that MBD from~\eqref{eq: MBD} can be written as the integrated depth~\eqref{eq: integrated depth} based on the one-dimensional simplicial depth~\eqref{eq: simplicial depth}.

Another important family of functional depths is that of \emph{infimal depths} given by
    \begin{equation}    \label{eq: infimal depth}
    D(x; P) = \inf_{t \in [0,1]} \Du(x(t); P_t).
    \end{equation}
Infimal depths were first considered in \citet{Mosler2013} and \citet{Mosler_Polyakova2016}. Under different names, they have been used and amended in \citet{Narisetty_Nair2016, Myllymaki_etal2017} and \citet{Myllymaki_Mrkvicka2019}.

A major advantage of integrated and infimal depths is their computational simplicity --- as the common one-dimensional depths $\Du$ are computable in linear time $O(n)$ for $n$ the sample size, both functional depths~\eqref{eq: integrated depth} and~\eqref{eq: infimal depth} are of complexity $O(n)$ too. This should be compared with the band depth~\eqref{eq: band depth}, whose naive computation requires $O(n^2)$ operations.
    
%Before proceeding further, we formally define depth functions in (one-dimensional, or more generally, $d$-variate) Euclidean spaces.

%    \[  \MBD(x; P) = \int_0^1 \DuS(x(t); P_t) \dd t  \]
%for $\DuS$ the one-dimensional simplicial depth~\eqref{eq: simplicial depth}, and $P_t \in \Prob[\R]$ the one-dimensional marginal distribution of $X(t) \in \R$, where $P \in \Prob[\FS]$. 

%In the same spirit, the general class of integrated depths for functional data is defined as depths of type  
%    \begin{equation}   \label{eq: integrated depth}
%    D(x; P) = \int_0^1 \Du(x(t); P_t) \dd t,    
%    \end{equation}
%for some one-dimensional depth $\Du$. Properties of integrated depths for functional data are studied in detail in \citet{Nagy_etal2016}. 

%
%
%

\subsection{Band convexity of functional depths}    \label{section: band convexity}

Given a dataset $\calX$ of independent and identically distributed functions $X_1, \dots, X_n$ sampled from $P \in \Prob[\FS]$, a functional depth $D$ is used to rank the observations in $\calX$ from the most central toward the peripheral ones. $\calX$ are used to construct the empirical distribution $P_n \in \Prob[\FS]$ that gives weight $1/n$ to each $X_i$, and to assess the (functional) depth of each observation
    \[  D_i = D(X_i; P_n), \quad \mbox{for }i = 1,\dots,n. \]
The higher $D_i$ is, the more centrally positioned is $X_i$ inside the dataset. We obtain an ordering $X_{1:n}, \dots, X_{n:n}$ of $\calX$ according to the decreasing depth value. The function $X_{1:n}$ is defined to be $X_i$ such that $D_i$ is maximal among $D_1, \dots, D_n$; this function is termed the \emph{median function} of $\calX$. $X_{2:n}$ is the $X_i$ whose depth $D_i$ is the second largest, etc. Cases of ties are broken as in $\R$ --- if a common value $D_i$ is shared by $m$ functions from $\calX$, then all these functions are assigned the average rank.  

The key step in constructing the functional boxplot in \citet{Sun_Genton2011} is defining the central region. This is done using a functional depth. In general, the \emph{$\tau$-sample central region} of a dataset $\calX$ with $\tau \in (0,1)$ is given as
    \begin{equation}    \label{eq: tau sample central region}
    \band{X_{1:n}, \dots, X_{\ceil{\tau\,n}:n}},    
    \end{equation}     
that is the band of the $\ceil{\tau\,n}$ deepest sample functions. Even though any functional depth can be used in a functional boxplot, the central region in~\eqref{eq: tau sample central region} is always a band. For the specific construction of a functional boxplot, the $\tau$-central region is used with $\tau = 1/2$. That is, one half of the deepest sample functions are taken in the construction of the central region.

When considering the $\tau$-central region in~\eqref{eq: tau sample central region}, it is implicitly assumed that any function $x$ inside this band gets depth $D(x; P_n)$ not smaller than $D(X_{\ceil{\tau\,n}}; P_n)$, i.e., it is at least as deep inside $P_n$ as any of the functions $X_{1:n}, \dots, X_{\ceil{\tau\,n}:n}$. Not all functional depths, however, possess this property. Recall our motivating example from Section~\ref{section: introduction}, where all the global outliers, which have the smallest MBD, are contained inside the $50\%$ MBD central region.

In analogy with the common convexity requirement for depths in $\R^d$ \citep{Serfling2006}, we require the \emph{band convexity} from functional depths:
    \begin{enumerate}[label=\textcolor{red}{\textsf{(BC)}}, ref=\textsf{(BC)}]
    \item \label{BC} For any $P \in \Prob[\FS]$ and any $\alpha \in [0,1]$, the central region 
        \begin{equation}    \label{eq: central region}
        D_\alpha(P) = \left\{ x \in \FS \colon D(x;P) \geq \alpha \right\}    
        \end{equation}
    is a band, i.e. $D_\alpha(P) = \band{D_\alpha(P)}$.
    \end{enumerate}
This condition is a formal translation of~\ref{BC} imposed in Section~\ref{section: introduction}. % \citet{Myllymaki_Mrkvicka2019} use for~\ref{BC} the term {\it intrinsic graphical interpretation} to express that the function that has a lower depth than all functions in the central region must leave the central region at least at one point.

Observe that the indexing in the sample case~\eqref{eq: tau sample central region} and the population case~\eqref{eq: central region} is different --- the constant $\alpha$ in~\eqref{eq: central region} corresponds to the $(1-\tau)$-quantile of the depth distribution $D(X; P)$, where $X \sim P$ is random. We denote the latter value by $\alpha(\tau) \in [0,1]$; $\alpha \colon (0,1) \to [0,1]$ is then a non-increasing function of $\tau \in (0,1)$. The quantity $\alpha(\tau)$ is designed so that approximately with probability $\tau$, a random function $X \sim P$ will have depth $D(X; P)$ greater than $\alpha(\tau)$.

Our motivating example showed that integrated depths do not satisfy~\ref{BC}. An example showing that also the standard band depth~\eqref{eq: band depth} violates~\ref{BC} is presented in the Supplementary Material.

\subsection{The central region of an integrated depth}  \label{section: central region}

A fundamental problem with the functional boxplot based on an integrated depth concerns its population version. We have a dataset $\calX$ of $X_1, \dots, X_n$ sampled from $P \in \Prob[\FS]$. We ask about the analog of the boxplot applicable directly to $P$, instead of just the sample boxplot. A natural population counterpart to the central region \eqref{eq: tau sample central region} with $\tau = 1/2$ is the band of all functions $x \in \FS$ whose depth is at least $\alpha(1/2)$, the median depth of a random function $X \sim P$. The population version of \eqref{eq: tau sample central region} is then the band
    \begin{equation*}    %\label{eq: population central region}
    \band{D_{\alpha(1/2)}(P)} = \band{\left\{ x \in \FS \colon D(x; P) \geq \alpha(1/2) \right\}},    
    \end{equation*}   
where we used the notation from~\eqref{eq: central region}. We have the following observation to make.

\begin{theorem} \label{theorem: infinite band}
Let $D$ be any integrated functional depth from~\eqref{eq: integrated depth} based on $D_1$ that obeys~\ref{D2} and~\ref{D3}. Then for any $P \in \Prob[\FS]$, the population $\alpha$-central region \eqref{eq: central region} for $\alpha < \sup_{x \in \FS} D(x; P)$ is the whole function space $\FS$.
\end{theorem}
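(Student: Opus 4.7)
My plan is to show that the band $\band{D_\alpha(P)}$ has the full real line as its slice at every $t_0 \in [0,1]$, which by~\eqref{eq: band} forces $\band{D_\alpha(P)} = \FS$. This is the sense in which the statement ``the population $\alpha$-central region is the whole function space $\FS$'' should be interpreted, in view of the paragraph preceding the theorem identifying the population central region with the band of the super-level set~\eqref{eq: central region}. The crucial feature of an integrated depth is a quantitative stability bound: because $\Du \in [0,1]$, replacing the argument function $x^*$ by a function $g$ that agrees with $x^*$ off a set $A \subseteq [0,1]$ alters the integrated depth by at most $\lambda(A)$.

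Concretely, I would fix $\alpha < \sup_{x \in \FS} D(x; P)$ and pick $x^* \in \FS$ with $D(x^*; P) > \alpha$, setting $\delta = D(x^*; P) - \alpha > 0$. For an arbitrary $t_0 \in [0,1]$ and $y \in \R$, the approximation property~\ref{A} yields $g = g(x^*, t_0, y, \delta) \in \FS$ with $g(t_0) = y$ and $\lambda(A) < \delta$, where $A = \{t \in [0,1] \colon g(t) \ne x^*(t)\}$. Splitting the difference of integrals over $A$ and its complement and using $0 \leq \Du \leq 1$ gives
\[ D(g;P) - D(x^*;P) = \int_A \bigl(\Du(g(t); P_t) - \Du(x^*(t); P_t)\bigr)\dd t \geq -\lambda(A) > -\delta, \]
so $D(g; P) > \alpha$ and hence $g \in D_\alpha(P)$.

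Because this construction succeeds for every $t_0 \in [0,1]$ and every $y \in \R$, the slice of $D_\alpha(P)$ at each $t_0$ is unbounded from both above and below, i.e.\ $\sup_{g \in D_\alpha(P)} g(t_0) = +\infty$ and $\inf_{g \in D_\alpha(P)} g(t_0) = -\infty$. Formula~\eqref{eq: band} then identifies $\band{D_\alpha(P)}$ with $\FS$, completing the proof. No substantive obstacle arises: property~\ref{A} is tailored precisely to deliver coordinate-prescribed approximations inside $\FS$, and the bound $\Du \leq 1$ supplies the needed stability of integrated depth under thin perturbations. Assumptions~\ref{D2}--\ref{D3} are not invoked directly in the computation; they only ensure the theorem is not vacuous by ruling out the degenerate case $\sup_{x \in \FS} D(x;P) = 0$.
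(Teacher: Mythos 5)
Your proof is correct and rests on the same mechanism as the paper's: use the approximation property~\ref{A} to alter a deep function at an arbitrary point $t_0$ to an arbitrary value $y$ while changing its integrated depth~\eqref{eq: integrated depth} by less than the available slack (since $0 \leq \Du \leq 1$, modifying the argument on a set of measure smaller than $\delta$ moves the integral by less than $\delta$), and conclude that every slice of $\band{D_\alpha(P)}$ is all of $\R$, hence the band is $\FS$. The one genuine difference is the choice of base function: the paper perturbs a coordinate-wise median $m$, which it identifies as a maximizer of the integrated depth via~\ref{D2} and~\ref{D3}, and must then deal separately (in a footnote) with the possibility that $m \notin \FS$; you instead perturb an arbitrary $x^* \in \FS$ with $D(x^*;P) > \alpha$, whose existence is immediate from the hypothesis $\alpha < \sup_{x \in \FS} D(x;P)$. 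This buys a cleaner argument in which~\ref{D2} and~\ref{D3} are never invoked (in the paper they serve only to locate the maximizer; your remark that they merely prevent vacuity is essentially right), and it makes explicit the quantitative stability bound that the paper's proof only states informally. You also correctly note that the conclusion concerns the band $\band{D_\alpha(P)}$ rather than the super-level set~\eqref{eq: central region} itself, which matches the paper's intended reading of the statement.
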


\begin{proof}
Directly from the definition of the integrated depth $D$ from~\eqref{eq: integrated depth} and conditions~\ref{D2} and~\ref{D3} imposed on $\Du$, we see that the maximum integrated depth is attained at (any) coordinate-wise median function $m \colon t \mapsto m(t) \in \median{P_t}$. Without loss of generality, assume that such $m$ is unique and $m \in \FS$.\footnote{It may happen that $m \notin \FS$. %, see e.g., \citet[Example~4.4]{Nagy_etal2016}. 
In that case, one defines $\alpha^* = \sup_{x \in \FS}D(x; P)$ and proceeds analogously.} The maximum integrated depth value of any function is $\alpha^* = \int_0^1 \Du(m(t); P_t) \dd t > 0$. 

Suppose now that $\alpha \leq \alpha^*$. Take $\varepsilon \in (0, \alpha^* - \alpha)$, $y \in \R$, and $t \in [0,1]$ be arbitrary. By property~\ref{A}, we know that there exists a function $x_t \in \FS$ such that $x_t(t) = y$, and at the same time $x_t$ is in ``most'' of the domain $[0,1]$ identical with $m$. The definition of the integrated depth~\eqref{eq: integrated depth} then guarantees that such a function $x_t$ can be found so that $D(x_t; P) \geq \alpha^* - \varepsilon$. In particular, $x_t \in D_{\alpha}(P)$. The function $x_t$ then necessarily contributes to the set $\band{D_\alpha(P)}$, yet as $x_t(t)$ can be taken to be any value $y \in \R$, we get that slice of the band $\band{D_\alpha(P)}$ at $t \in [0,1]$ is, in fact, the whole space $\R$. Because this construction is possible to be performed for any $t \in [0,1]$, the resulting band $\band{D_\alpha(P)}$ must span the whole space $\FS$.
\end{proof}

Theorem~\ref{theorem: infinite band} holds true for the functional boxplot defined by MBD from~\eqref{eq: MBD}. The population version of that boxplot is, therefore, not well-defined. In fact, it is unclear what, in this situation, the sample boxplot based on the central region~\eqref{eq: tau sample central region} with $\tau = 1/2$ even estimates. Our motivating example demonstrated this: the local outliers belonged to the MBD $50\%$ central region. The same phenomenon is also attested in the Supplementary Material, where we run a small simulation exercise to account for the stochastic nature of the setup. In that example, the width of the $50\%$ central region based on an integrated and infimal depth, together with the average number of detected outliers, is reported. As expected, the mean width grows with the sample size $n$ for integrated depths, but remains stable when infimal depths are used.

\subsection{A necessary condition for band convexity}   \label{section: infimal depth}

Our main result states that the only functional depths that satisfy the crucial band convexity criterion~\ref{BC} are depths of infimal type. To prove it in the following theorem, we will argue that a function $g \colon \R \to [0,1]$, that constitutes a building block for the infimal depth, can be considered as a one-dimensional depth function that satisfies conditions~\ref{D1}--\ref{D4}. Since a depth $D_1$ is a function of both $u \in \R$ and $Q \in \Prob[\R]$ but $g$ is a function of only $u \in \R$, we must first explain what this means. Out of the desiderata for $D_1$, only~\ref{D1} explicitly depends on the argument of the distribution $Q \in \Prob[\R]$; the other conditions~\ref{D2}--\ref{D4} take $Q$ fixed and given. Thus, by saying that $g$ verifies~\ref{D2}--\ref{D4}, we mean that $g$ satisfies these conditions as a function of $u \in \R$. We say that $g$ satisfies~\ref{D1}, if for all $a \ne 0$ and $b \in \R$ given, $g$ can be defined in two setups: (i) $g_{1,0}$ for $Q_Y \in \Prob[\R]$ corresponding to $Y$, and (ii) $g_{a,b}$ for $Q_{a\,Y + b} \in \Prob[\R]$ corresponding to $a\,Y+b$. We say that $g$ verifies~\ref{D1} if $g_{1,0}(u) = g_{a,b}(a\,u+b)$ for all $u \in \R$, $a \ne 0$, and $b \in \R$. 

\begin{theorem} \label{theorem: infimal}
Suppose that a map $D \colon \FS \times \Prob[\FS] \to [0,1]$ verifies~\ref{BC}. Then for each $\alpha \in [0,1]$ we can write $D_\alpha(P) = \band{\ell_\alpha, u_\alpha}$ for some functions $\ell_\alpha, u_\alpha \colon [0,1] \to \Rinf$, $\ell_\alpha \leq u_\alpha$, and $D$ can be written in the form 
    \begin{equation}    \label{eq: g infimal depth}
    D(x; P) = \inf_{t \in [0,1]} g_t(x(t)),    
    \end{equation} 
where
    \begin{equation}    \label{eq: gt}
     g_t(u) = \sup\left\{ \alpha \in [0,1] \colon u \in \left[ \ell_\alpha(t), u_\alpha(t) \right] \right\}      
    \end{equation}  
for $u \in \R$ is a collection of one-dimensional functions $g_t \colon \R \to [0,1]$, $t \in [0,1]$. In addition, the following holds true.
    \begin{enumerate}[label=(\roman*), ref=(\roman*)]
    \item \label{theorem: infimal i} If $D$ satisfies condition~\ref{FD1}, then each $g_t$ verifies~\ref{D1};
    \item \label{theorem: infimal ii} If $D$ satisfies condition~\ref{FD2}, then each $g_t$ verifies~\ref{D2};
    \item \label{theorem: infimal iii} $D$ satisfies~\ref{FD3}. 
    \item \label{theorem: infimal iv} Each function $g_t$ is upper semi-continuous and satisfies~\ref{D3}.
    \item \label{theorem: infimal v} $D$ satisfies~\ref{FD4} if and only if for each $\alpha > 0$ and $P \in \Prob[\FS]$ is $D_\alpha(P)$ bounded. 
    \item \label{theorem: infimal vi} Suppose that $D$ satisfies~\ref{FD4} and $\FS$ verifies
        \begin{enumerate}[label=\textcolor{red}{\textsf{(A')}}, ref=\textsf{(A')}]
        \item \label{A'} For any $f \in \FS$, $t_0 \in [0,1]$, $y \in \R$, we have that a function $g$ given by $g(t_0) = y$ and $g(t) = f(t)$ for $t \ne t_0$ belongs to $\FS$.
        \end{enumerate}  
    Then each function $g_t$ verifies~\ref{D4}. 
    \end{enumerate}
In particular, if $D$ satisfies all~\ref{FD1}--\ref{FD4} and~\ref{A'} is true for $\FS$, then $g_t$ defined in~\eqref{eq: gt} is a depth in $\R$ that satisfies all~\ref{D1}--\ref{D4}. In that case, $D$ must be an infimal depth for functional data.
\end{theorem}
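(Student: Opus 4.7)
First, I would extract the representation from~\ref{BC} alone. Set $\ell_\alpha(t) = \inf\{x(t) : x \in D_\alpha(P)\}$ and $u_\alpha(t) = \sup\{x(t) : x \in D_\alpha(P)\}$, with $\ell_\alpha = +\infty$, $u_\alpha = -\infty$ when $D_\alpha(P) = \emptyset$. The inclusion $D_\alpha(P) \subseteq \band{\ell_\alpha, u_\alpha}$ is automatic, while $\band{D_\alpha(P)} = \band{\ell_\alpha, u_\alpha}$ together with~\ref{BC} yields the reverse, establishing the first claim. Nestedness of the central regions in $\alpha$ forces $\ell_\alpha$ to be nondecreasing and $u_\alpha$ nonincreasing, so each per-slice set $A_t(x) = \{\alpha : x(t) \in [\ell_\alpha(t), u_\alpha(t)]\}$ is downward-closed. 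Starting from the tautology $D(x;P) = \sup\{\alpha : x \in D_\alpha(P)\}$ and unfolding the band condition pointwise, a short monotonicity argument then gives $\sup \bigcap_t A_t(x) = \inf_t \sup A_t(x)$, i.e., $D(x;P) = \inf_t g_t(x(t))$ with $g_t$ as in~\eqref{eq: gt}.

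Next I would peel off the claims~\ref{theorem: infimal iii},~\ref{theorem: infimal iv}, and~\ref{theorem: infimal v}, each of which follows from~\ref{BC} alone. For~\ref{theorem: infimal iii}, if $D(m;P) \geq D(x;P) =: \alpha$, both $m$ and $x$ lie in $D_\alpha(P) = \band{\ell_\alpha, u_\alpha}$, so any $y \in \band{m,x}$ is pointwise sandwiched between them and hence lies in the same band, giving $D(y;P) \geq D(x;P)$. For~\ref{theorem: infimal iv}, the $\alpha$-monotonicity of the endpoints yields $\{u : g_t(u) \geq \alpha\} = \bigcap_{\alpha' < \alpha}[\ell_{\alpha'}(t), u_{\alpha'}(t)]$, an intersection of closed intervals, hence $g_t$ is upper semi-continuous; the slice-wise version of the sandwiching argument in~\ref{theorem: infimal iii} then gives the one-dimensional monotonicity w.r.t. a maximizer. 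Part~\ref{theorem: infimal v} is just~\ref{FD4} reformulated in terms of the superlevel sets $D_\alpha(P)$.

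Finally I turn to the implications invoking the \textsf{FD}-type axioms. For~\ref{theorem: infimal i}, specializing~\ref{FD1} to constant $a,b \in \R \subset \FS$ gives $D_\alpha(P_{aX+b}) = a D_\alpha(P_X) + b$, and this transports through $\ell_\alpha, u_\alpha$ and~\eqref{eq: gt} to the affine invariance of $g_t$. For~\ref{theorem: infimal vi}, if $g_{t_0}(u_n) \geq \gamma > 0$ for some $u_n \to \infty$, then $u_\alpha(t_0) = \infty$ for every $\alpha \in (0,\gamma)$; the definition $u_\alpha(t_0) = \sup_{x \in D_\alpha(P)} x(t_0)$ produces a sequence $x_k \in D_\alpha(P)$ with $x_k(t_0) \to \infty$, the norm compatibility of $\FS$ forces $\left\Vert x_k \right\Vert \to \infty$, and $D(x_k;P) \geq \alpha > 0$ contradicts~\ref{FD4}. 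The hardest step will be~\ref{theorem: infimal ii}, where I would first verify the identity $\sup g_t = \sup_{\FS} D(\cdot;P)$ directly from~\eqref{eq: gt} and the nonemptiness pattern of the $D_\alpha(P)$. Given any $f \in \FS$ that attains $\sup_{\FS} D(\cdot;P)$, I would then invoke~\ref{A'} to replace $f(t)$ by $m_0$ to obtain $\tilde f \in \FS$; since this single-coordinate substitution affects the infimum in~\eqref{eq: g infimal depth} only at $t$, and the new value $g_t(m_0) = \sup D$ dominates the previous one, $\tilde f$ remains a maximizer with $\tilde f(t) = m_0$, so~\ref{FD2} forces $m_0 \in \median{P_t}$. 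The delicate point is handling the case when $\sup_{\FS} D(\cdot;P)$ is not attained in $\FS$; the natural fix is to build, via~\ref{A'}, approximate maximizers $\tilde f_n$ with $\tilde f_n(t) = m_0$ along $\alpha_n \uparrow \sup D$ and extract the median condition in the limit. Once~\ref{theorem: infimal i}--\ref{theorem: infimal vi} are in place, the closing statement is an immediate comparison of~\eqref{eq: g infimal depth} with~\eqref{eq: infimal depth}.
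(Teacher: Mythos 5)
Your extraction of the representation, and your arguments for the first claim and for parts (i), (iii), (iv), (v), follow essentially the paper's route: the Dyckerhoff-type identity $D(x;P)=\sup\{\alpha\colon x\in D_\alpha(P)\}$, the envelope functions $\ell_\alpha,u_\alpha$ of~\eqref{eq: l and u}, nestedness/downward-closedness in $\alpha$, and the slice-wise sandwiching argument; these are correct. Your part (vi) is a genuinely different and arguably cleaner argument: the paper uses~\ref{A'} to build functions $x_n$ agreeing with a fixed deep function except at $t_0$ and lets~\ref{FD4} force $g_{t_0}(u_n)\to 0$, whereas you note that $g_{t_0}(u_n)\geq\gamma>\alpha$ along $u_n\to\infty$ forces $u_\alpha(t_0)=\sup\{x(t_0)\colon x\in D_\alpha(P)\}=\infty$, so $D_\alpha(P)$ itself contains functions with exploding values at $t_0$, and norm compatibility together with the boundedness of level sets from part (v) gives the contradiction. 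For the canonical $g_t$ of~\eqref{eq: gt} this does not use~\ref{A'} at all, so it proves a bit more than stated; it does not clash with Example~\ref{example: wiener}, which concerns a non-canonical choice of the marginal functions.

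The genuine gap is part (ii). As stated, (ii) assumes only~\ref{BC} and~\ref{FD2}; condition~\ref{A'} enters the theorem only in part (vi). Your argument, however, hinges on~\ref{A'}: you modify a maximizer $f$ of $D(\cdot;P)$ at the single point $t$ so that it passes through a maximizer $m_0$ of $g_t$, and such a one-point modification need not belong to $\FS$ (for instance when $\FS$ consists of continuous functions), so you only establish (ii) under an extra hypothesis the statement does not grant. Your fallback for the case where $\sup_{x\in\FS}D(x;P)$ is not attained also does not work: \ref{FD2} is a conditional statement about exact maximizers and says nothing about approximate maximizers $\tilde f_n$ with $D(\tilde f_n;P)\geq\alpha^*-1/n$, so no median condition can be extracted in the limit from it. The paper argues without~\ref{A'}: by the representation, any $x\in\FS$ with $D(x;P)\geq\alpha^*$ has $g_t(x(t))\geq\alpha^*$ for every $t$, and by~\ref{FD2} every such $x$ is a coordinate-wise median, so the values realized at maximal depth lie in $\median{P_t}$; maximality of $g_t$ is then identified with these realized values rather than with a function produced by pointwise surgery. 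To keep your construction you would have to either add~\ref{A'} (or a comparable richness assumption on $\FS$) to the hypotheses of (ii), or replace the one-point modification by an argument that stays inside $\FS$, e.g., along the lines of the paper's identification.
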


The opposite implication of Theorem~\ref{theorem: infimal} is also true, under mild conditions --- a functional depth $D$ verifies~\ref{BC} if and only if it is of infimal type. We prove this in Theorem~\ref{theorem: properties of infimal depth} below.

\begin{proof}
\textbf{Formula~\eqref{eq: g infimal depth}.} We adapt the argumentation from \citet[Proposition~5]{Dyckerhoff2004} to the function case, and use the structure of functional bands. The map $D$ can equivalently be written in the form
    $D(x; P) = \sup\left\{ \alpha \in [0,1] \colon x \in D_\alpha(P) \right\}$,
where $D_\alpha(P)$ are the central regions of $D$ given in \eqref{eq: central region}. Since each $D_\alpha(P)$ is a band, it is uniquely determined by its lower and upper envelope functions, respectively, given by
    \begin{equation}    \label{eq: l and u}
    \begin{aligned}
    \ell_\alpha(t) & = \inf\left\{ x(t) \colon x \in D_\alpha(P) \right\}, \quad \mbox{ and } \\
    u_\alpha(t) & = \sup\left\{ x(t) \colon x \in D_\alpha(P) \right\}.
    \end{aligned}
    \end{equation}
Thanks to the condition~\ref{BC}, we see in particular that 
    \[  D_\alpha(P) = \left\{ x \in \FS \colon \ell_\alpha(t) \leq x(t) \leq u_\alpha(t) \mbox{ for all }t \in [0,1] \right\}   \]
for each $\alpha \in [0,1]$. We thus have that $x \in D_\alpha(P)$ if and only if $x(t) \in \left[ \ell_\alpha(t), u_\alpha(t) \right]$ for all $t \in [0,1]$ and $x \in \FS$. We can now define a function $g_t \colon \R \to [0,1]$ by~\eqref{eq: gt}. This function depends only on the slice of $D_\alpha(P)$ at $t \in [0,1]$. The map $g_t$ is a function of only $x(t)$. Using $g_t$ we have $x \in D_\alpha(P)$ if and only if $g_t(x(t)) \geq \alpha$ for each $t \in [0,1]$, or equivalently if $\inf_{t \in [0,1]} g_t(x(t)) \geq \alpha$. We obtain that for each $\alpha \in [0,1]$ we have $D(x; P) \geq \alpha$ if and only if $\inf_{t \in [0,1]} g_t(x(t)) \geq \alpha$, which means that necessarily $D(x; P) = \inf_{t \in [0,1]} g_t(x(t))$, as was to be shown in the first claim of the theorem. 

\mypar{Part~\ref{theorem: infimal i}} Suppose now that~\ref{FD1} is true for $D$. In terms of the level sets $D_\alpha(P)$, that condition can be written as function-affine equivariance of the sets $\left\{ D_\alpha(P) \right\}_{\alpha \in [0,1]}$. That equivariance is defined by
    \begin{equation}    \label{eq: function-affine equiv} 
    D_\alpha(Q) = \left\{ a\,x + b \in \FS \colon x \in D_\alpha(P) \right\}  
    \end{equation}
being true for all $a, b \in \FS$, $a(t) \ne 0$ for all $t \in [0,1]$, where $Q \sim \Prob[\FS]$ stands for the distribution of the function-affine image $a\,X + b \in \FS$ of $X \sim P$. Now, since~\eqref{eq: function-affine equiv} can also be written for each slice at $t \in [0,1]$ separately, an analogous property must be true for each $t \in [0,1]$ fixed. The latter can be expressed as the affine invariance of $g_t$ as in~\ref{D1}.

\mypar{Part~\ref{theorem: infimal ii}} Next, we want to show that~\ref{FD2} for $D$ implies~\ref{D2} for each $g_t$. Suppose that $\alpha^* = \sup_{x \in \FS} D(x; P) = D(m; P)$ as in~\ref{FD2}. Then 
    \[ \alpha^* = D(m; P) = \inf_{t \in [0,1]} g_t(m(t)) \]
by the first claim of this theorem, and necessarily for any $x \in \FS$ with $D(x; P) \geq \alpha^*$ we have $g_t(x(t)) \geq \alpha^*$ for each $t \in [0,1]$. By~\ref{FD2} we know that any such $x$ must be a coordinate-wise median function of $P$, i.e. $x(t) \in \median{P_t}$ for each $t \in [0,1]$. Together, we obtain that for each $t \in [0,1]$, $g_t(x(t)) \geq \alpha^*$ implies that $x(t) \in \median{P_t}$, meaning that any maximizer of $g_t(\cdot)$ must necessarily be a median of $\median{P_t}$, as we wanted to show. 

\mypar{Part~\ref{theorem: infimal iii}} It is enough to realize that for any $m , x$ and $y$ that obey the conditions in~\ref{FD3} we have that for $\alpha = D(x; P)$ we can write $\alpha \leq D(m; P)$. That gives $x, m \in D_\alpha(P)$, and because of the band convexity~\ref{BC} of $D$ also $y \in D_\alpha(P)$.

\mypar{Part~\ref{theorem: infimal iv}} For each $t \in [0,1]$, the upper level sets of the one-dimensional functions $u \mapsto g_t(u)$ at level $\alpha \in [0,1]$ are the intervals $\left[ \ell_\alpha(t), u_\alpha(t) \right] \subseteq \Rinf$ from~\eqref{eq: l and u}. Each such interval is closed, which means that the function $g_t$ must be upper semi-continuous for each $t \in [0,1]$. Furthermore, since we assume that each such upper level set is an interval, the function $g_t$ must also be unimodal in the sense that~\ref{D3} holds.

\mypar{Part~\ref{theorem: infimal v}} Take a sequence $\left\{ x_n \right\}_{n=1}^\infty \subset \FS$ such that $\left\Vert x_n \right\Vert \to \infty$. Fix a set $D_\alpha(P)$ with $\alpha > 0$. By definition, $D_\alpha(P)$ is bounded in $\FS$ if and only if for any sequence $\left\{ x_n \right\}_{n=1}^\infty$ as above we have that $x_n \notin D_\alpha(P)$ for all $n$ large enough. Because we assume that this is true for all $\alpha > 0$, we see that $\left\Vert x_n \right\Vert \to \infty$ implies $D(x_n; P) \to 0$ as $n \to \infty$. The opposite implication is analogous.

\mypar{Part~\ref{theorem: infimal vi}} Suppose that~\ref{A'} and~\ref{FD4} are true, $t_0 \in [0,1]$ is given, and $\left\{ u_n \right\}_{n=1}^\infty \subset \R$ is a sequence such that $\left\vert u_n \right\vert \to \infty$ as $n \to \infty$. Take $x \in \FS$ such that $D(x; P) > 0$, and define $\left\{x_n\right\}_{n=1}^\infty \subset \FS$ so that $x_n(t_0) = u_n$, but $x_n(t) = x(t)$ for all $t \ne t_0$. These functions belong to $\FS$ because of~\ref{A'}. Since we assume that the norm $\left\Vert \cdot \right\Vert$ in $\FS$ is compatible with $\left\vert \cdot \right\vert$ in $\R$, we have that $\left\vert u_n \right\vert \to \infty$ implies $\left\Vert x_n \right\Vert \to \infty$. Thus, by the boundedness of all the bands $D_\alpha(P)$ (in the norm $\left\Vert \cdot \right\Vert$) from part~\ref{theorem: infimal v} of this proof, necessarily $D(x_n; P) \to 0$ as $n \to \infty$. On the other hand, because $\inf_{t \in [0,1], t \ne t_0} g_t(x_n(t)) \geq \inf_{t \in [0,1]} g_t(x(t)) = D(x; P) > 0$ we also know that 
    \[  
    \begin{aligned}
    D(x_n; P) & = \inf_{t \in [0,1]} g_t(x_n(t)) \\
    & = \min\left\{ g_{t_0}(x_n(t_0)), \inf_{t \in [0,1], t \ne t_0} g_t(x_n(t)) \right\}, 
    \end{aligned}  
    \]
which gives that necessarily $g_{t_0}(x_n(t_0)) = g_{t_0}(u_n) \to 0$ as $n \to \infty$, as we wanted to show.
\end{proof}

The additional assumption~\ref{A'} in part~\ref{theorem: infimal vi} of Theorem~\ref{theorem: infimal} is stronger than the approximation property~\ref{A}. Condition~\ref{A'} is not valid in the space of continuous functions; for the purposes of our theory,  this is only a minor technical nuisance. For the last part of Theorem~\ref{theorem: infimal}, condition~\ref{A'} is, however, needed, as we see in the following example.

\begin{example} \label{example: wiener}
Take $\FS$ the space of continuous functions with the supremum norm, $P \in \Prob[\FS]$ the distribution of a Wiener process on $[0,1]$. Consider two infimal depths: first, take $D$ the infimal depth~\eqref{eq: g infimal depth} based on $g_t$ the one-dimensional halfspace depth~\eqref{eq: halfspace depth}. It will be shown below in Theorem~\ref{theorem: properties of infimal depth} that $D$ satisfies all properties~\ref{FD1}--\ref{FD4}. For the second depth, we define $g_0(u; Q) = 1$ for all $u \in \R$ and $Q \in \Prob[\R]$, and leave $g_t \equiv \DuH$ the halfspace depth for all $t \ne 0$. It is not difficult to see that the resulting infimal depth of type~\eqref{eq: g infimal depth} with this choice of $g_t$ coincides with the first infimal depth based on the halfspace depth~\eqref{eq: halfspace depth} for all $t \in [0,1]$. This holds true due to the upper semi-continuity of the function $t \mapsto \DuH(x(t); P_t)$ for each $x \in \FS$, see \citet[Theorem~4.14]{Nagy_etal2016}. Thus, even though $D$ obeys~\ref{FD4}, it is not necessary that $g_0$ must satisfy~\ref{D4}.
\end{example}

Theorem~\ref{theorem: infimal} shows that a sensible functional boxplot can be defined only if the underlying functional depth is of infimal type. An infimal depth~\eqref{eq: g infimal depth} is, on the other hand, uniquely determined by the collection of the one-dimensional functions $g_t$ on $\R$, $t \in [0,1]$. While in principle, it is possible to combine different one-dimensional depths $g_t$ in~\eqref{eq: g infimal depth} as we did in Example~\ref{example: wiener}, it is certainly more natural to consider a single depth $\Du = g_t$ for each $t \in [0,1]$ in the definition of $D$. This observation can be formalized using another property that is sometimes deemed desirable for functional depth, the \emph{rearrangement invariance} introduced in \citet[Section~4]{Mosler_Polyakova2016}.
    \begin{enumerate}[label=\textcolor{red}{\textsf{(RI)}}, ref=\textsf{(RI)}]
    \item \label{RI} For any surjective map $\varphi \colon [0,1] \to [0,1]$ such that for each $x \in \FS$ we have for $y = x \circ \varphi \colon t \mapsto x(\varphi(t))$ that $y \in \FS$, it holds true that 
        \[  D(x; P_X) = D(x \circ \varphi; P_{X \circ \varphi}) \quad \mbox{for all }x \in \FS.    \]
    Here, $P_{X \circ \varphi} \in \Prob[\FS]$ stands for the distribution of $X \circ \varphi \colon t \mapsto X(\varphi(t))$, and $X \sim P_X$.
    \end{enumerate}
Observe that the invariance in~\ref{RI} is not required only w.r.t. (measure-preserving) bijective maps $\varphi$, as considered in \citet{Mosler_Polyakova2016}, but more broadly.

The idea of rearrangement invariance of a functional depth couples tightly with the fact that all marginal depths $g_t$ defining the infimal depth $D$ are the same. 

\begin{theorem} \label{theorem: RI}
If we define a functional depth using the map~\eqref{eq: g infimal depth} with $\Du = g_t$ for each $t \in [0,1]$, then $D$ satisfies~\ref{RI}.
\end{theorem}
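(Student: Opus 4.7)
The plan is to unfold both sides of the desired identity directly from the definition \eqref{eq: g infimal depth} and reduce the claim to a relabelling of the index set. First, I will compute the right-hand side $D(x \circ \varphi; P_{X \circ \varphi})$. By definition,
\[
D(x \circ \varphi; P_{X \circ \varphi}) = \inf_{t \in [0,1]} \Du\bigl((x\circ\varphi)(t);\, (P_{X\circ\varphi})_t\bigr).
\]
The marginal $(P_{X \circ \varphi})_t$ is, by its definition in Section~2, the distribution of $(X \circ \varphi)(t) = X(\varphi(t))$, which coincides with $P_{\varphi(t)}$. Substituting this in, the right-hand side becomes
\[
\inf_{t \in [0,1]} \Du\bigl(x(\varphi(t));\, P_{\varphi(t)}\bigr).
\]

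The second step is to change variables via $s = \varphi(t)$. Because $\varphi \colon [0,1] \to [0,1]$ is surjective, the map $t \mapsto (x(\varphi(t)), P_{\varphi(t)})$ traces out exactly the same set of pairs $(x(s), P_s)$, $s \in [0,1]$, as the map $s \mapsto (x(s), P_s)$; in particular the ranges of the functions $t \mapsto \Du(x(\varphi(t)); P_{\varphi(t)})$ and $s \mapsto \Du(x(s); P_s)$ are identical subsets of $[0,1]$. Therefore their infima agree, and
\[
\inf_{t \in [0,1]} \Du\bigl(x(\varphi(t));\, P_{\varphi(t)}\bigr) = \inf_{s \in [0,1]} \Du(x(s); P_s) = D(x; P_X).
\]
Chaining the two identities yields property~\ref{RI}.

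There is no real obstacle: the crucial feature used is that a single one-dimensional depth $\Du$ (independent of $t$) appears in \eqref{eq: g infimal depth}, so the quantity being infimized depends on $t$ only through the pair $(x(t), P_t)$, and surjectivity of $\varphi$ guarantees that reindexing this pair by $\varphi$ preserves the set of values attained. Note that measurability or measure-preservation of $\varphi$ is not needed because we only compare the ranges of deterministic functions of $t$; this is consistent with the remark that \ref{RI} here is stated for arbitrary surjections, not just bijections.
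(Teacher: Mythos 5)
Your argument is correct and is essentially the paper's own proof: both unfold the definition~\eqref{eq: g infimal depth}, identify the marginal $(P_{X\circ\varphi})_t$ with $P_{\varphi(t)}$, and use surjectivity of $\varphi$ to conclude that the infimized values form the same set, so the infima coincide. Nothing further is needed.
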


\begin{proof}
Naturally, the infimum in~\eqref{eq: g infimal depth} does not depend on the parametrization of the domain $[0,1]$; we have
    \[  
    \begin{aligned}
    D(x;  & P_X) = \inf_{t \in [0,1]} \Du(x(t); P_t) \\
    & = \inf_{t \in [0,1]} \Du(x(\varphi(t)); P_{\varphi(t)}) = D(x \circ \varphi; P_{X \circ \varphi}),    
    \end{aligned}  \]
where we used only the definition of the infimum and the surjectivity of $\varphi$.
\end{proof}

Also, the other implication in Theorem~\ref{theorem: RI} is true, under natural additional assumptions. %: if a functional depth that can be written as~\eqref{eq: g infimal depth} verifies~\ref{RI}, then all $g_t$ must be the same. This can be seen as follows. Suppose that not all $g_t$ are the same, meaning that there exist $s, t \in [0,1]$, $u \in \R$, and $Q \in \Prob[\R]$ such that $g_s(u; Q) < g_t(u; Q)$. One can then consider the surjective map $\varphi \colon [0,1] \to [0,1]$ such that $\varphi(s) = t$ and $\varphi(t) = s$, while $\varphi(v) = v$ for all $v \in [0,1] \setminus\{ s, t\}$. Choosing $X \sim P \in \Prob[\FS]$ such that $P_t = Q$ and $x$ taking value $u$ at $t$, one is able to construct simple counterexamples violating~\ref{RI}. 
Formulating such a result rigorously is, however, rather tedious as one has to cope with technical issues, such as the existence of surjective maps $\varphi \colon [0,1] \to [0,1]$ such that $x \circ \varphi \notin \FS$ (say, for $\FS$ continuous functions, if $\varphi$ fails to be continuous). This can be overcome at the cost of additional technical assumptions imposed on $\FS$; we refrain from stating this result explicitly.

\subsection{One-dimensional depth and its properties} \label{section: one-dimensional}

In the common situation when all marginal depths $g_t$ in~\eqref{eq: g infimal depth} are the same $\Du$, the properties of a well-defined functional boxplot depend only on the one-dimensional depth $\Du$. In what follows, we thus focus on one-dimensional depth functions and explore their properties w.r.t. the resulting functional boxplot. 

In the following result, we characterize all one-di\-men\-sio\-nal depth functions that verify all conditions~\ref{D1}--\ref{D4}. For that reason, it will be instrumental to partition the space of all probability distributions $\Prob[\R]$ into what we call \emph{affine classes}. A collection $\mathcal G \subset \Prob[\R]$ is called an \emph{affine class} of $Q \in \mathcal G$ if for each $\widetilde{Q} \in \mathcal G$ there exist $a \in \R$, $a \ne 0$, and $b \in \R$ such that $Y \sim \widetilde{Q}$ has the same distribution as the random variable $a\,X + b$ with $X \sim Q$. We write $\widetilde{Q} = \aff(Q, a, b)$. 

\begin{theorem} \label{theorem: one-dimensional depth}
A function $\Du \colon \R \times \Prob[\R] \to [0,1]$ verifies all~\ref{D1}--\ref{D4} if and only if for each affine class $\mathcal G$ and $Q \in \mathcal G$ as above there exists a function $\xi \colon \R \to [0,1]$ such that 
    \begin{itemize}
        \item $\xi$ is maximized at a median $m \in \median{Q} \in \R$ of $Q$, it is non-decreasing on $(-\infty, m]$ and non-increasing on $[m,\infty)$, and \[ \lim_{x \to -\infty} \xi(x) = \lim_{x \to \infty} \xi(x) = 0;\] 
        \item $\Du(x; Q) = \xi(x)$ for each $x \in \R$, and
        \item for each $\widetilde{Q} = \aff(Q, a, b) \in \mathcal G$ we have
    $\Du(x; \widetilde{Q}) = \Du\left( \frac{x - b}{a}; Q\right)$ for each $x \in \R$.
    \end{itemize}
\end{theorem}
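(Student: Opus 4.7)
The plan is to prove the equivalence by a direct translation between conditions~\ref{D1}--\ref{D4} and the three bullet points, exploiting the fact that~\ref{D1} effectively partitions the problem into affine equivalence classes on which the depth is determined by a single shape function $\xi$.

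For the only-if direction, I would assume $\Du$ satisfies~\ref{D1}--\ref{D4}. Given any affine class $\mathcal G$ and any representative $Q \in \mathcal G$, I would simply set $\xi(x) := \Du(x; Q)$ for every $x \in \R$. The second bullet then holds by definition. The first bullet unpacks as follows: the maximization at a median of $Q$ comes from~\ref{D2}, the unimodal monotonicity comes from~\ref{D3}, and the vanishing at $\pm\infty$ comes from~\ref{D4}. For the third bullet, I would let $\widetilde Q = \aff(Q,a,b) \in \mathcal G$, i.e., $\widetilde Q$ is the distribution of $a\,Y + b$ for $Y \sim Q$; applying~\ref{D1} with $u = (x-b)/a$ yields $\Du((x-b)/a; Q) = \Du(a\cdot (x-b)/a + b; \widetilde Q) = \Du(x; \widetilde Q)$, exactly as required.

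For the if direction, I would assume the three bullets and read them as a specification of $\Du$ on each affine class: on the fixed representative, $\Du(x; Q) = \xi(x)$, and on any other $\widetilde Q = \aff(Q, a, b)$ the third bullet forces $\Du(x; \widetilde Q) = \xi((x-b)/a)$. Property~\ref{D1} is then an immediate reading of the third bullet. On the representative $Q$ itself, properties~\ref{D2}--\ref{D4} coincide word-for-word with the three shape assumptions on $\xi$. For a general $\widetilde Q$, I would transfer them by the affine change of variable: the medians of $\widetilde Q$ are precisely $\{a\,m + b : m \in \median{Q}\}$, the composition of a unimodal function with an affine map remains unimodal (even when $a < 0$, because the symmetric form of~\ref{D3} is preserved under reversal), and $|x| \to \infty$ is equivalent to $|(x-b)/a| \to \infty$.

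The main subtlety will be verifying self-consistency of the if-direction construction when $Q$ admits a nontrivial affine self-symmetry, that is, when $\widetilde Q = \aff(Q, a, b) = \aff(Q, a', b')$ for some $(a,b) \ne (a', b')$ (for example, $X \mapsto -X$ for a symmetric $Q$). One then needs $\xi((x-b)/a) = \xi((x-b')/a')$, which follows because in the only-if direction the same self-symmetry of $\Du(\cdot; Q)$ is forced by~\ref{D1}, while in the if direction one simply stipulates $\xi$ to be compatible with the stabilizer of $Q$ under affine transformations. Apart from this bookkeeping, the argument is structural and carries no genuine analytic difficulty.
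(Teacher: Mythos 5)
The paper offers no written proof of this theorem (its proof reads ``Straightforward and omitted''), and your argument is exactly the direct translation the authors evidently intend --- set $\xi=\Du(\cdot\,;Q)$ for the only-if direction and transport \ref{D1}--\ref{D4} along affine maps for the if direction --- so you are correct and on the same route, including your (harmless, essentially automatic) remark about affine self-symmetries of $Q$. The one caveat, which is a looseness of the theorem statement itself rather than something you introduced, is that \ref{D2} and \ref{D3} are conditional (they assert nothing unless the supremum is attained, and \ref{D2} constrains \emph{every} maximizer), so your phrases ``the maximization at a median comes from \ref{D2}'' and ``\ref{D2}--\ref{D4} coincide word-for-word with the shape assumptions'' are not literally exact: a fully rigorous version would either assume the supremum of $\Du(\cdot\,;Q)$ is attained and read the first bullet as saying that all maximizers of $\xi$ lie in $\median{Q}$, or accept the statement at the informal level at which the paper itself leaves it.
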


\begin{proof}
Straightforward and omitted.
\end{proof}

The common halfspace and simplicial depths in $\R$ are obtained by relating $\xi$ to the distribution function $F$ of $Q$ as in~\eqref{eq: halfspace depth} and \eqref{eq: simplicial depth}, respectively. The properties of one-dimensional depths $\Du$ determine the traits of the resulting functional boxplot. 

\begin{theorem} \label{theorem: properties of infimal depth}
Let $\Du$ be a one-dimensional depth satisfying all~\ref{D1}--\ref{D4}. Consider the infimal depth given by
    %\[  
    $D(x; P) = \inf_{t \in [0,1]} \Du(x(t); P_t)$.
    %\]
Then, the following is true.
    \begin{enumerate}[label=(\roman*), ref=(\roman*)]
    \item $D$ satisfies~\ref{FD1};
    \item Suppose that there exists $c \in (0,1]$ such that for each $Q \in \Prob[\R]$
        \begin{equation}    \label{eq: median condition}
        \Du(u; Q) \geq c \mbox{ if and only if }u \in \median{Q}. \tag{\textsf{\textcolor{red}{MD}}}
        \end{equation}
   Consider any $P \in \Prob[\FS]$ such that there exists $m \in \FS$ satisfying $m(t) \in \median{P_t}$ for each $t \in [0,1]$. Then, for this $P$, the depth $D$ satisfies~\ref{FD2}. In~\eqref{eq: function maximality}, one can then write $D(m; P) \geq c$. 
    \item $D$ satisfies~\ref{BC} and~\ref{FD3}.
    \end{enumerate}
\end{theorem}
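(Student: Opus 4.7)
The plan is to translate each of the four assertions into a pointwise statement about the one-dimensional depth $\Du(\cdot; P_t)$ at each $t \in [0,1]$ and then take the infimum, exploiting $D(x; P) = \inf_{t \in [0,1]} \Du(x(t); P_t)$.

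Part~(i) is immediate from~\ref{D1}: at each $t$, the scalar affine map $u \mapsto a(t)\,u + b(t)$ (with $a(t) \neq 0$) transports $P_t$ to the marginal $P_{(aX+b)_t}$, so $\Du((a\,x+b)(t); P_{(aX+b)_t}) = \Du(x(t); P_t)$; taking the infimum over $t$ yields~\ref{FD1}. For part~(ii), the \eqref{eq: median condition} hypothesis combined with $m(t) \in \median{P_t}$ gives $\Du(m(t); P_t) \geq c$ for every $t$, hence $D(m;P) \geq c > 0$, which proves the final claim and lower-bounds $\sup D$; conversely, any maximizer $x$ satisfies $D(x;P) \geq c$, so $\Du(x(t); P_t) \geq c$ for all $t$, and the ``only if'' direction of \eqref{eq: median condition} then forces $x(t) \in \median{P_t}$ for each $t$.

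The core tool for part~(iii) is that~\ref{D3} makes the upper level set $S_t(\alpha) := \{u \in \R \colon \Du(u; P_t) \geq \alpha\}$ an interval in $\R$ for every $t$ and $\alpha \in [0,1]$. For~\ref{FD3}, given a maximizer $m$, an arbitrary $x \in \FS$, and any $y \in \FS \cap \band{m,x}$, I set $\alpha = D(x;P)$ and note that $m(t), x(t) \in S_t(\alpha)$ for every $t$; since $y(t)$ lies between $m(t)$ and $x(t)$ and $S_t(\alpha)$ is an interval, $y(t) \in S_t(\alpha)$ as well, yielding $D(y;P) \geq D(x;P)$. For~\ref{BC}, the inclusion $D_\alpha(P) \subseteq \band{D_\alpha(P)}$ is trivial, and the reverse follows by observing that for $x \in \band{D_\alpha(P)}$ the value $x(t) \in [a_t, b_t]$, where $a_t, b_t$ are the inf and sup of the slice $\{y(t) \colon y \in D_\alpha(P)\} \subseteq S_t(\alpha)$; the interval property then propagates $\Du \geq \alpha$ across the whole segment $[a_t, b_t]$, placing $x$ in $D_\alpha(P)$.

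The main obstacle lies in the closedness issue in the \ref{BC} step: the endpoints $a_t, b_t$ a priori lie only in the closure of $S_t(\alpha)$, so one needs $S_t(\alpha)$ itself to be closed, i.e., $\Du(\cdot; P_t)$ to be upper semi-continuous. This holds automatically for the running examples---the halfspace depth~\eqref{eq: halfspace depth} and simplicial depth~\eqref{eq: simplicial depth}---and mirrors Theorem~\ref{theorem: infimal}\ref{theorem: infimal iv} in the opposite direction; in full generality, the proof either carries upper semi-continuity as a mild additional assumption on $\Du$, or handles the boundary values of $S_t(\alpha)$ directly by exploiting the one-sided monotonicity from~\ref{D3} together with the vanishing from~\ref{D4}.
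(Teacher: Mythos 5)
The paper defers its own proof of this theorem to the Supplementary Material, so I am comparing your argument with the expected one; your overall strategy (pointwise reduction to $\Du(\cdot;P_t)$ at each $t$, plus the interval structure of univariate level sets) is exactly the natural route. Parts (i) and (ii) are correct as written. For \ref{FD3} your argument is also fine, with the small caveat that \ref{D3} delivers the interval property of $S_t(\alpha)=\{u\colon \Du(u;P_t)\geq\alpha\}$ only when $\Du(\cdot;P_t)$ attains a positive maximum (the condition is stated conditionally on such a maximizer); this reading is shared by the paper and is harmless for the depths actually used.

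The substantive point is the one you flag for \ref{BC}, and there your two proposed remedies are not interchangeable. Assuming upper semi-continuity of $u\mapsto \Du(u;Q)$ (equivalently, closedness of the sets $S_t(\alpha)$) is the correct fix, and by Theorem~\ref{theorem: infimal}\ref{theorem: infimal iv} it is not an artifact of the proof: closed slices are \emph{necessary} for \ref{BC}, and the halfspace depth~\eqref{eq: halfspace depth} used throughout satisfies it. Your fallback --- recovering the endpoint values from \ref{D3} and \ref{D4} alone --- cannot work. For instance, let $Q_0$ be uniform on $[-1,1]$ and define, on its affine class, $\Du(u;Q_0)=(2-\left\vert u\right\vert)/2$ for $\left\vert u\right\vert<1$ and $0$ otherwise (and, say, the halfspace depth off this class). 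All of \ref{D1}--\ref{D4} hold, yet for the process $X(t)\equiv U$ with $U\sim Q_0$ one gets $D_{1/4}(P)=\{x\in\FS\colon \left\vert x(t)\right\vert<1 \mbox{ for all } t\}$, whose band $\band{D_{1/4}(P)}$ contains the constant function $1$, which has depth $0$; so \ref{BC} fails even though \ref{D3} and \ref{D4} are available. In other words, the closedness of the univariate level sets must be assumed (or verified for the specific $\Du$, as it is for the halfspace depth), not derived from \ref{D1}--\ref{D4}; the theorem as literally stated glosses over this, and your proof should state the upper semi-continuity explicitly rather than offer the second, unworkable alternative.
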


The proof of Theorem~\ref{theorem: properties of infimal depth} can be found in the Supplementary Material. Condition~\eqref{eq: median condition} is true for the one-dimensional halfspace depth with $c = 1/2$, as
    \[  \DuH(u; Q) = \min\left\{ F(u), 1 - \lim_{v \to u-} F(v) \right\} \geq 1/2    \]
is true if and only if $F(u) \geq 1/2$ and $\lim_{v \to u-} F(v) \leq 1/2$, i.e. precisely when $u \in \median{Q}$. On the other hand, it is interesting to observe that~\eqref{eq: median condition} is not always true for the one-dimensional simplicial depth; an example is presented in the Supplementary Material. In the generality of Theorem~\ref{theorem: properties of infimal depth}, it is not true that $D$ must always obey~\ref{FD4}; this is shown in another example given in the Supplementary Material. For an infimal depth $D$ based on the one-dimensional halfspace~\eqref{eq: halfspace depth} or simplicial depth~\eqref{eq: simplicial depth}, condition~\ref{FD4} is however true for $D$, as shown in, e.g., \citet[Section~5]{Mosler_Polyakova2016} or \citet[Section~3.2]{Narisetty_Nair2016}.

\section{Simulations}   \label{section: simulations}

We now present a simulation study with centered Gaussian processes with covariance function
    \[  \Sigma(s, t) = \exp\left(- \frac{(s-t)^2}{h} \right) \quad \mbox{for }s, t \in [0,1].   \]
Random samples $X_1, \dots, X_n$ of size $n$ were generated for every experiment. Different values of $h \in (0,\infty)$ were chosen to control the volatility of functions. Small values of $h$ (that is, $\log(h) = -4$, i.e., $h \approx 0.018$) represent Gaussian processes that are quite rough and ``high-dimensional''. Larger values of $h$ (e.g., $\log(h) = 2$, that is $h \approx 7.389$) generate random functions that are quite stable in their course, and do not wiggle much.

For each sample, we considered functional boxplots based on the halfspace depth~\eqref{eq: halfspace depth} of both \begin{enumerate*}[label=(\roman*)] \item infimal type from~\eqref{eq: infimal depth}, and \item integrated type from~\eqref{eq: integrated depth}.\end{enumerate*} For the boxplots based on the infimal depth, two additional refinements of the ranking of the curves are considered: \begin{enumerate*}[label=(\roman*)] \item \textsf{erl} --- the extremal rank length \citep{Myllymaki_Mrkvicka2019}, which is equivalent with the tie-breaking procedure involved in the extremal depth for functional data \citep{Narisetty_Nair2016}, and \item \textsf{area} --- an interpolation technique based on an \textsf{area} index \citep{Myllymaki_Mrkvicka2019}. \end{enumerate*} In each sample, several characteristics of the boxplot were considered:
    \begin{itemize}
        %\item The median depth value $D$ in the random sample, i.e. the threshold to determine whether a function belongs to the band of central functions~\eqref{eq: central region} or not (Table~\ref{Tab:cutoff}).
        %\item The absolute number of tied functions at the position of the functional median (sample function maximizing the depth $D$) out of the $n$ sample functions (Table~\ref{Tab:ties.med}).
        %\item The absolute number of tied functions whose depth $D$ coincides with the median of $D$-values, out of the $n$ sample functions (Table~\ref{Tab:ties.box}). This the total number of functions whose depth is precisely at the threshold for belonging to the box.
        \item Percentage of the sample curves that are contained in the central region of $50\%$ deepest sample functions (Table~\ref{Tab:boxs}). 
        \item Percentage of the sample curves that are contained inside the whiskers band (Table~\ref{Tab:whis}). 
        \item The mean width of the central region over $t \in [0,1]$ (Table~\ref{Tab:meaw}).
    \end{itemize}
Each of these characteristics was calculated in $100$ independent runs.

%\input{Tables/TAB_boxs2}

% latex table generated in R 4.3.1 by xtable 1.8-4 package
% Mon Oct 23 15:12:32 2023
\begin{table}[htpb]
\centering
\caption{Means and standard deviations (in brackets) of the percentage of functions in the central region from the original dataset.} 
\label{Tab:boxs}
\resizebox{\linewidth}{!}{\begin{tabular}{ll|cccc}
   \hline
 depth & $n$ & $\log(h) = -4$ & $\log(h) = -2$ & $\log(h) = 0$ & $\log(h) = 2$ \\
   \hline
\multirow{ 3}{*}{\textsf{integrated}} & 50   & 51.000  \scriptsize{(1.694)}  & 51.540  \scriptsize{(1.925)}  & 50.700  \scriptsize{(1.219)}  & 50.040  \scriptsize{(0.281)}  \\ 
                                & 500  & 75.136  \scriptsize{(3.048)}  & 67.394  \scriptsize{(2.638)}  & 54.790  \scriptsize{(1.327)}  & 50.574  \scriptsize{(0.451)}  \\ 
                                & 5000 & 94.454  \scriptsize{(0.835)}  & 82.082  \scriptsize{(1.548)}  & 59.269  \scriptsize{(0.725)}  & 51.868  \scriptsize{(0.464)}  \\ \cdashline{1-6}
  \multirow{ 3}{*}{\textsf{infimal}}    & 50   & 60.940  \scriptsize{(13.062)} & 55.000  \scriptsize{(4.535)}  & 53.360  \scriptsize{(3.086)}  & 52.300  \scriptsize{(2.245)}  \\ 
                                & 500  & 51.046  \scriptsize{(0.689)}  & 50.494  \scriptsize{(0.400)}  & 50.234  \scriptsize{(0.231)}  & 50.202  \scriptsize{(0.221)}  \\ 
                                & 5000 & 50.122  \scriptsize{(0.080)}  & 50.051  \scriptsize{(0.040)}  & 50.029  \scriptsize{(0.027)}  & 50.021  \scriptsize{(0.022)}  \\ \cdashline{1-6}
  \multirow{ 3}{*}{\textsf{erl}}        & 50   & 50.000  \scriptsize{(0.000)}  & 50.000  \scriptsize{(0.000)}  & 50.000  \scriptsize{(0.000)}  & 50.000  \scriptsize{(0.000)}  \\ 
                                & 500  & 50.000  \scriptsize{(0.000)}  & 50.000  \scriptsize{(0.000)}  & 50.000  \scriptsize{(0.000)}  & 50.000  \scriptsize{(0.000)}  \\ 
                                & 5000 & 50.000  \scriptsize{(0.000)}  & 50.000  \scriptsize{(0.000)}  & 50.000  \scriptsize{(0.000)}  & 50.000  \scriptsize{(0.000)}  \\ \cdashline{1-6}
  \multirow{ 3}{*}{\textsf{area}}       & 50   & 50.000  \scriptsize{(0.000)}  & 50.000  \scriptsize{(0.000)}  & 50.000  \scriptsize{(0.000)}  & 50.000  \scriptsize{(0.000)}  \\ 
                                & 500  & 50.000  \scriptsize{(0.000)}  & 50.000  \scriptsize{(0.000)}  & 50.000  \scriptsize{(0.000)}  & 50.000  \scriptsize{(0.000)}  \\ 
                                & 5000 & 50.000  \scriptsize{(0.000)}  & 50.000  \scriptsize{(0.000)}  & 50.000  \scriptsize{(0.000)}  & 50.000  \scriptsize{(0.000)}  \\ 
   \hline
\end{tabular}}
\end{table}

% \input{Tables/TAB_whis2}
% latex table generated in R 4.3.1 by xtable 1.8-4 package
% Mon Oct 23 15:12:32 2023
\begin{table}[htpb]
\centering
\caption{Means and standard deviations (in brackets) of the percentage of functions within the whiskers band from the original dataset.} 
\label{Tab:whis}
\resizebox{\linewidth}{!}{\begin{tabular}{ll|cccc}
   \hline
 depth & $n$ & $\log(h) = -4$ & $\log(h) = -2$ & $\log(h) = 0$ & $\log(h) = 2$ \\
   \hline
\multirow{ 3}{*}{\textsf{integrated}} & 50   & 100.000  \scriptsize{(0.000)} & 99.880  \scriptsize{(0.556)}  & 98.900  \scriptsize{(1.667)}  & 97.620  \scriptsize{(3.250)}  \\ 
                                & 500  & 100.000  \scriptsize{(0.000)} & 100.000  \scriptsize{(0.000)} & 99.878  \scriptsize{(0.165)}  & 99.282  \scriptsize{(0.452)}  \\ 
                                & 5000 & 100.000  \scriptsize{(0.000)} & 100.000  \scriptsize{(0.000)} & 99.975  \scriptsize{(0.023)}  & 99.518  \scriptsize{(0.100)}  \\ \cdashline{1-6}
  \multirow{ 3}{*}{\textsf{infimal}}    & 50   & 99.980  \scriptsize{(0.200)}  & 100.000  \scriptsize{(0.000)} & 99.640  \scriptsize{(0.871)}  & 99.120  \scriptsize{(1.565)}  \\ 
                                & 500  & 100.000  \scriptsize{(0.000)} & 100.000  \scriptsize{(0.000)} & 99.986  \scriptsize{(0.051)}  & 99.780  \scriptsize{(0.210)}  \\ 
                                & 5000 & 100.000  \scriptsize{(0.000)} & 100.000  \scriptsize{(0.000)} & 99.995  \scriptsize{(0.010)}  & 99.879  \scriptsize{(0.055)}  \\ \cdashline{1-6}
  \multirow{ 3}{*}{\textsf{erl}}        & 50   & 99.980  \scriptsize{(0.200)}  & 100.000  \scriptsize{(0.000)} & 99.480  \scriptsize{(1.087)}  & 98.840  \scriptsize{(1.824)}  \\ 
                                & 500  & 100.000  \scriptsize{(0.000)} & 100.000  \scriptsize{(0.000)} & 99.986  \scriptsize{(0.051)}  & 99.766  \scriptsize{(0.218)}  \\ 
                                & 5000 & 100.000  \scriptsize{(0.000)} & 100.000  \scriptsize{(0.000)} & 99.995  \scriptsize{(0.010)}  & 99.878  \scriptsize{(0.055)}  \\ \cdashline{1-6}
  \multirow{ 3}{*}{\textsf{area}}       & 50   & 99.980  \scriptsize{(0.200)}  & 100.000  \scriptsize{(0.000)} & 99.480  \scriptsize{(1.087)}  & 98.840  \scriptsize{(1.824)}  \\ 
                                & 500  & 100.000  \scriptsize{(0.000)} & 100.000  \scriptsize{(0.000)} & 99.986  \scriptsize{(0.051)}  & 99.774  \scriptsize{(0.212)}  \\ 
                                & 5000 & 100.000  \scriptsize{(0.000)} & 100.000  \scriptsize{(0.000)} & 99.995  \scriptsize{(0.010)}  & 99.879  \scriptsize{(0.055)}  \\ 
   \hline
\end{tabular}}
\end{table}

%\input{Tables/TAB_meaw2}
% latex table generated in R 4.3.1 by xtable 1.8-4 package
% Mon Oct 23 15:12:32 2023
\begin{table}[htpb]
\centering
\caption{Means and standard deviations (in brackets) of the mean central region width.} 
\label{Tab:meaw}
\resizebox{\linewidth}{!}{\begin{tabular}{ll|cccc}
   \hline
 depth & $n$ & $\log(h) = -4$ & $\log(h) = -2$ & $\log(h) = 0$ & $\log(h) = 2$ \\
   \hline
\multirow{ 3}{*}{\textsf{integrated}} & 50   & 3.184  \scriptsize{(0.223)} & 2.565  \scriptsize{(0.237)} & 1.860  \scriptsize{(0.229)} & 1.431  \scriptsize{(0.195)} \\ 
                                & 500  & 4.733  \scriptsize{(0.178)} & 3.632  \scriptsize{(0.173)} & 2.360  \scriptsize{(0.104)} & 1.774  \scriptsize{(0.086)} \\ 
                                & 5000 & 6.067  \scriptsize{(0.133)} & 4.515  \scriptsize{(0.131)} & 2.714  \scriptsize{(0.066)} & 2.050  \scriptsize{(0.058)} \\ \cdashline{1-6}
  \multirow{ 3}{*}{\textsf{infimal}}    & 50   & 3.273  \scriptsize{(0.408)} & 2.559  \scriptsize{(0.247)} & 1.950  \scriptsize{(0.224)} & 1.514  \scriptsize{(0.194)} \\ 
                                & 500  & 3.587  \scriptsize{(0.065)} & 2.813  \scriptsize{(0.068)} & 2.117  \scriptsize{(0.075)} & 1.646  \scriptsize{(0.077)} \\ 
                                & 5000 & 3.723  \scriptsize{(0.020)} & 2.892  \scriptsize{(0.022)} & 2.184  \scriptsize{(0.026)} & 1.714  \scriptsize{(0.022)} \\ \cdashline{1-6}
  \multirow{ 3}{*}{\textsf{erl}}        & 50   & 2.959  \scriptsize{(0.173)} & 2.413  \scriptsize{(0.189)} & 1.852  \scriptsize{(0.197)} & 1.448  \scriptsize{(0.186)} \\ 
                                & 500  & 3.555  \scriptsize{(0.061)} & 2.797  \scriptsize{(0.066)} & 2.108  \scriptsize{(0.075)} & 1.638  \scriptsize{(0.076)} \\ 
                                & 5000 & 3.720  \scriptsize{(0.019)} & 2.890  \scriptsize{(0.022)} & 2.183  \scriptsize{(0.026)} & 1.713  \scriptsize{(0.022)} \\ \cdashline{1-6}
  \multirow{ 3}{*}{\textsf{area}}       & 50   & 2.973  \scriptsize{(0.181)} & 2.414  \scriptsize{(0.194)} & 1.855  \scriptsize{(0.198)} & 1.448  \scriptsize{(0.184)} \\ 
                                & 500  & 3.560  \scriptsize{(0.061)} & 2.800  \scriptsize{(0.065)} & 2.110  \scriptsize{(0.074)} & 1.640  \scriptsize{(0.075)} \\ 
                                & 5000 & 3.720  \scriptsize{(0.019)} & 2.890  \scriptsize{(0.022)} & 2.183  \scriptsize{(0.026)} & 1.714  \scriptsize{(0.022)} \\ 
   \hline
\end{tabular}}
\end{table}

Table~\ref{Tab:boxs} illustrates the degree of how much the employed depths violate the band convexity criterion in practice; while ideally, the mean in Table~\ref{Tab:boxs} should be around $50\%$, we see that especially for $n$ large and volatile datasets, the central region based on integrated depths may contain much higher number of functions than desired. In Table~\ref{Tab:boxs}, we also recognize the amount of depth-ties for functions defining the edge of the 50\% central region for the pure infimal depth (\textsf{infimal}) without tie-breaking. When these ties are broken using \textsf{erl} or \textsf{area}, these problems disappear. Table~\ref{Tab:whis} shows that the integrated boxplot is too narrow for smooth functions (i.e., the coverage does not attain values near $100\%$ for larger $h$), while Table~\ref{Tab:meaw} shows that the integrated boxplot constructed from a higher number of functions tends to be too wide for volatile data. It also shows that the width of the integrated $50\%$ central region quickly increases with the number of functions. This observation complements the statement of our Theorem~\ref{theorem: infinite band}. The Supplementary Material presents several further characteristics observed in the simulation study. 

In addition to the previous results, in each run of our simulation study, a second random sample $Y_1, \dots, Y_{n_{\mathrm{test}}}$ of $n_{\mathrm{test}} = 10~000$ random functions from the same model was drawn, independently of the original data. For a boxplot based on the original data $X_1, \dots, X_n$, we used the testing sample to assess the percentage of the test curves $Y_1, \dots, Y_{n_{\mathrm{test}}}$ contained inside the central region based on $X_1, \dots, X_n$ (Table~\ref{Tab:boxt}). Here, one can not realistically expect that the coverage will be exactly $50\%$ since the sample central regions estimate the population central regions. However, one should require that the coverage of sample central regions converges from below to $50\%$ as $n \to \infty$. Table~\ref{Tab:boxt} thus demonstrates the convergence speed of the sample central regions to their population counterparts. The coverage of all infimal depths converges from below to $50\%$, whereas the coverage of the integrated depth seems to converge to a number that is higher than $50\%$, and that depends on the wiggliness of the functions.  The central regions of the refined infimal depths (\textsf{erl} and \textsf{area}) are thinner due to the tie-breaking nature of those depths. Their coverage is, therefore, necessarily smaller than those for the pure infimal depth. 
 
%\input{Tables/TAB_boxt2}
% latex table generated in R 4.3.1 by xtable 1.8-4 package
% Mon Oct 23 15:12:32 2023
\begin{table}[htpb]
\centering
\caption{Means and standard deviations (in brackets) of the percentage of functions in the central region from the test sample.} 
\label{Tab:boxt}
\resizebox{\linewidth}{!}{\begin{tabular}{ll|cccc}
   \hline
 depth & $n$ & $\log(h) = -4$ & $\log(h) = -2$ & $\log(h) = 0$ & $\log(h) = 2$ \\
   \hline
\multirow{ 3}{*}{\textsf{integrated}} & 50   & 14.138  \scriptsize{(4.634)}  & 28.422  \scriptsize{(6.257)}  & 36.343  \scriptsize{(6.911)}  & 38.832  \scriptsize{(6.635)}  \\ 
                                & 500  & 68.254  \scriptsize{(4.114)}  & 63.928  \scriptsize{(3.310)}  & 52.062  \scriptsize{(2.597)}  & 48.392  \scriptsize{(2.301)}  \\ 
                                & 5000 & 93.532  \scriptsize{(0.930)}  & 81.649  \scriptsize{(1.665)}  & 58.917  \scriptsize{(1.065)}  & 51.601  \scriptsize{(0.909)}  \\ \cdashline{1-6}
  \multirow{ 3}{*}{\textsf{infimal}}    & 50   & 21.412  \scriptsize{(12.173)} & 31.714  \scriptsize{(7.771)}  & 39.567  \scriptsize{(7.158)}  & 42.064  \scriptsize{(6.420)}  \\ 
                                & 500  & 40.620  \scriptsize{(2.405)}  & 45.369  \scriptsize{(2.337)}  & 47.405  \scriptsize{(2.469)}  & 48.060  \scriptsize{(2.352)}  \\ 
                                & 5000 & 48.151  \scriptsize{(0.737)}  & 49.052  \scriptsize{(0.845)}  & 49.436  \scriptsize{(0.964)}  & 49.738  \scriptsize{(0.778)}  \\ \cdashline{1-6}
  \multirow{ 3}{*}{\textsf{erl}}        & 50   & 13.176  \scriptsize{(3.909)}  & 27.533  \scriptsize{(5.971)}  & 36.480  \scriptsize{(6.511)}  & 39.969  \scriptsize{(6.216)}  \\ 
                                & 500  & 39.517  \scriptsize{(2.230)}  & 44.843  \scriptsize{(2.228)}  & 47.150  \scriptsize{(2.469)}  & 47.850  \scriptsize{(2.345)}  \\ 
                                & 5000 & 48.015  \scriptsize{(0.726)}  & 48.994  \scriptsize{(0.856)}  & 49.404  \scriptsize{(0.964)}  & 49.710  \scriptsize{(0.781)}  \\ \cdashline{1-6}
  \multirow{ 3}{*}{\textsf{area}}       & 50   & 13.545  \scriptsize{(4.247)}  & 27.519  \scriptsize{(6.057)}  & 36.490  \scriptsize{(6.573)}  & 39.904  \scriptsize{(6.214)}  \\ 
                                & 500  & 39.609  \scriptsize{(2.246)}  & 44.906  \scriptsize{(2.222)}  & 47.174  \scriptsize{(2.476)}  & 47.880  \scriptsize{(2.342)}  \\ 
                                & 5000 & 48.022  \scriptsize{(0.728)}  & 48.990  \scriptsize{(0.854)}  & 49.407  \scriptsize{(0.967)}  & 49.713  \scriptsize{(0.784)}  \\ 
   \hline
\end{tabular}}
\end{table}

\section{Conclusions and discussion}    \label{section: conclusions}

We have shown that the functional boxplots founded on local principles must be based on depths that follow the same paradigm. Since the functional boxplot decides about the outlyingness of functions locally (by evaluating whether a function falls outside the whiskers band at some argument $t \in [0,1]$), the central region that defines this threshold must also be based on local principles. We demonstrated that the only functional depths satisfying the ensuing desirable band convexity property are depths of infimal type. 

From a practical perspective, we observed that depths of infimal type tend to produce ties in rankings. We recommend refining the depth using a tie-breaking procedure before constructing functional boxplots. Both refinements investigated in this paper, \textsf{erl} and \textsf{area}, have demonstrated to be viable by giving near-desired coverage of the sample central region in our simulation study. Our simulation study also shows that their sample central region width is quite stable; therefore, such refinements can be used to approximate the population central region of the functional distribution. The situation is far less favorable for integrated depths, as demonstrated theoretically and in simulation studies. 

Reaching beyond classical functional boxplots, it is natural to consider functional data with multi-dimensional domains and/or co-domains \citep{Genton_etal2014, Yao_etal2020, Qu_Genton2022}. The situation with a compact domain $\mathcal T \subset \R^d$ and $\R$-valued responses is simple and follows the theory derived in this paper without any alteration. The case of vector-valued functions $x \colon \mathcal T \to \R^m$ is less trivial. First, we can again assume that $\mathcal T = [0,1]$, without loss of generality. There are two natural ways to extend the functional band to vector-valued functions. The first possibility is to require convexity of all the slices of bands in $\FS$. Let $B \subseteq \FS$ be a subset of the space of functions $\left\{ x \colon [0,1] \to \R^m \right\}$. The \emph{band} of $B$ is then defined as
    \begin{equation}    \label{eq: convex band}
    \begin{aligned}
    \band{B} & = \left\{ x \in \FS \colon x(t) \in   \right. \\ 
    & \left. \conv{ \left\{ y(t) \colon y \in B \right\}} \mbox{ for all }t \in [0,1] \right\}, 
    \end{aligned}
    \end{equation}   
where $\conv{A}$ is the convex hull of $A \subset \R^m$. Another option is to consider (finite-dimensional) bands themselves as a generalization of the intervals used in~\eqref{eq: band}. This leads to the alternative notion of a band
    \begin{equation}    \label{eq: rectangular band}
    \begin{aligned}
    \band{B} & = \Big\{ x \in \FS \colon \inf_{y \in B} y_j(t) \leq x_j(t) \leq \sup_{y \in B} y_j(t) \\
    & \mbox{ for all }t \in [0,1] \mbox{ and }j = 1, \dots, m \Big\}. 
    \end{aligned}
    \end{equation}    
Here we use the notation $y = (y_1, \dots, y_m)\tr$ for any $y \in \FS$, where $y_j \colon [0,1] \to \R$ are the component functions of $y$. The advantage of convex bands~\eqref{eq: convex band} is their geometric interpretation; they are also conveniently visualized if $m = 2$. The appeal of rectangular bands~\eqref{eq: rectangular band} comes from their simplicity. Each slice of $\band{B}$ from~\eqref{eq: rectangular band} at $t \in [0,1]$ takes the form of an $m$-dimensional rectangle $[a_1(t), b_1(t)] \times \dots \times [a_m(t), b_m(t)] \subseteq \R^m$, where $a_j(t) \leq b_j(t)$ for each $j = 1, \dots, m$ and $t \in [0,1]$. This makes it easy to visualize the band for $m > 2$, as each its component is determined only by two functions $a_j, b_j \colon [0,1] \to \Rinf$. The coordinate-wise approach~\eqref{eq: rectangular band} necessarily loses information about the dependence structure between the component functions. Each of the two methods~\eqref{eq: convex band} and~\eqref{eq: rectangular band} has its particular pros and cons. While some of our findings can also be adapted to multivariate functional data (e.g., depths of infimal type are preferable to those of integrated type), the complexity of visualization of multivariate curves invariably calls for further study of boxplots for multivariate functional data.

Concluding from a wider perspective, this paper demonstrated that infimal depths are the best choice for constructing the central region and functional boxplot. This fact, however, certainly does not diminish the value of other functional depths, which can be useful for ranking the functional data, and nonparametric inference.

If our task is to detect functional outliers, the functional boxplot with the preferred infimal depth is useful for detecting locally outlying functions. To detect global outliers or functions outlying in shape, the infimal depth-based boxplot may be used in conjunction with appropriate functional transformations revealing the outliers \citep{Dai_etal2020}.

\def\polhk#1{\setbox0=\hbox{#1}{\ooalign{\hidewidth
  \lower1.5ex\hbox{`}\hidewidth\crcr\unhbox0}}}

%% or include bibliography directly:
% \begin{thebibliography}{}
% \bibitem{b1}
% \end{thebibliography}

\end{document}